\documentclass[lettersize,journal,twoside]{IEEEtran}
\usepackage{amsmath,amssymb,amsfonts}
\usepackage{array}
\usepackage[caption=false,font=normalsize,labelfont=sf,textfont=sf]{subfig}
\usepackage{textcomp}
\usepackage{stfloats}
\usepackage{url}
\usepackage{verbatim}
\usepackage{graphicx}
\usepackage{cite}
\usepackage{color}
\usepackage{setspace}
\usepackage{subfig}
\allowdisplaybreaks[1]
\usepackage[linesnumbered, ruled]{algorithm2e}
\SetKwRepeat{Do}{do}{while}%
\usepackage[thmmarks,amsmath]{ntheorem}
\theoremseparator{:}

\newtheorem{lemma}{Lemma}
\newtheorem{proposition}{Proposition}
\theoremheaderfont{\it}
\theorembodyfont{\normalfont}
\theoremsymbol{\ensuremath{\Box}}
\newtheorem*{proof}{Proof}
\theoremheaderfont{\it}
\theorembodyfont{\normalfont}
\theoremsymbol{}
\newtheorem{remark}{Remark}
\usepackage{setspace}
\begin{document}

\title{Orchestrating Communication, Computing, and Energy Transfer for Wireless-Powered 6G Closed-Loop Controls}

\author{Chengleyang Lei, Wei Feng,~\IEEEmembership{Senior Member,~IEEE}, Yanmin Wang, Yunfei Chen,~\IEEEmembership{Fellow,~IEEE}, \\Xiaoyu Liu, Liuguo Yin,~\IEEEmembership{Senior Member,~IEEE}, and Ning Ge,~\IEEEmembership{Member,~IEEE}
	\thanks{Chengleyang Lei, Wei Feng, and Ning Ge are with the Department of Electronic Engineering, State Key Laboratory of Space Network and Communications, Tsinghua University, Beijing 100084, China (email: lcly21@mails.tsinghua.edu.cn; fengwei@tsinghua.edu.cn; gening@tsinghua.edu.cn).}
	\thanks{Yanmin Wang is with the School of Information Engineering, Minzu University of China, Beijing 100081, China~(email: wangyanmin@muc.edu.cn).}
	\thanks{Yunfei Chen is with the Department of Engineering, University of Durham, DH1 3LE Durham, U.K. (e-mail: yunfei.chen@durham.ac.uk).}
	\thanks{Xiaoyu Liu is with the Radio Research Center, China Academy of Information and Communications Technology, Beijing 100083, China (e-mail: liuxiaoyu1@caict.ac.cn).}
	\thanks{Liuguo Yin is with Beijing National Research Center for Information
		Science and Technology, Tsinghua University, Beijing 100084, China (e-mail:
		yinlg@tsinghua.edu.cn).}
	
}




\maketitle
\pagestyle{empty}
\thispagestyle{empty}

\begin{abstract}
Future sixth generation (6G) communications are expected to support robotic control tasks in applications such as industrial automation and emergency response, where sensors, computing units, and robots are interconnected via nervous system-like networks to form sensing-communication-computing-control ($\textbf{SC}^3$) closed loops. However, the limited battery capacities of devices within these $\textbf{SC}^3$ loops constrain operational duration and degrade control efficiency, particularly in remote or post-disaster scenarios. To address this challenge, wireless power transfer (WPT) can be leveraged to provide continuous energy supply for $\textbf{SC}^3$ closed loops. In this paper, we investigate a wireless-powered $\textbf{SC}^3$ system, where a satellite transfers energy via radio frequency (RF) signals to support the communication and computing processes of multiple $\textbf{SC}^3$ closed loops. By accounting for the intricate coupling among computing, communication, and energy transfer, we propose a holistic design framework to enhance overall control performance. Specifically, we adopt the linear quadratic regulator (LQR) cost as the performance metric and formulate a sum LQR cost minimization problem. The uplink/downlink transmit power, bandwidth allocation, computing capability, communication/computing time allocation, and WPT power allocation are jointly optimized. We recast the problem into a more tractable form and develop an iterative algorithm to solve it. For the special case of a single loop, we further analyze the properties of optimal solutions in energy-limited scenarios to provide insights for practical parameter configuration. Simulation results demonstrate the performance gains of the proposed scheme.	
\end{abstract}

\begin{IEEEkeywords}
Closed-loop control, complex coupling, satellite, wireless power transfer (WPT).
\end{IEEEkeywords}

\section{Introduction}
The sixth generation (6G) communication networks are envisioned to provide ubiquitous connectivity and intelligent services for machine-type and robotic applications, including industrial automation, scientific exploration, and emergency response~\cite{6G-1,6G-2,JSAC1}. Rather than merely supporting data transmissions, in such scenarios, the networks are expected to enable closed-loop systems that tightly integrate sensing, communication, computing, and control, referred to as sensing-communication-computing-control ($\textbf{SC}^3$) closed loops~\cite{6G-2,JSAC1,wcl}. Specifically, within an $\textbf{SC}^3$ loop, sensors collect environmental information and target status, then transmit them to a computing unit; the computing unit processes the sensing data and generates control commands, which are then conveyed to on-site robots for task execution. The efficient accomplishment of robotic tasks thus hinges on seamless coordination among all components within the $\textbf{SC}^3$ closed loop.

In remote or post-disaster areas, however, terrestrial communication infrastructures may be unavailable or severely damaged. In such scenarios, non-terrestrial networks, including autonomous aerial vehicles (AAVs, also known as UAVs) and satellites, emerge as promising alternatives to support $\textbf{SC}^3$ closed loops~\cite{network,satellite,satellite1}. Nevertheless, the limited on-board energy of UAVs and the constrained battery capacities of field sensors pose significant challenges to sustainable operation, potentially degrading control performance or even causing task interruption. This necessitates energy-aware design principles for $\textbf{SC}^3$ systems in such environments.

Recently, wireless power transfer (WPT) has emerged as a promising solution to overcome the energy limitations of UAVs and Internet of Things (IoT) devices~\cite{WPT1,WPT_survey}. In particular, by exploiting high-power radio-frequency (RF) signals, satellites are envisioned to provide continuous energy supply to $\textbf{SC}^3$
closed loops from space, thereby enabling long-term robotic operations~\cite{WPT2,WPT3}. The feasibility of space-based WPT has been validated through a series of experimental studies. For instance, Caltech has conducted in-orbit experiments demonstrating power beaming from space, with ground stations successfully detecting the transmitted energy~\cite{WPT4}. Meanwhile, in China, the Sun-Chasing Project has developed a full-link, full-system ground demonstration and verification platform for the OMEGA space solar power satellite~\cite{WPT5}. These advancements underscore the potential of satellite-enabled WPT as a forward-looking architecture for future mission-critical $\textbf{SC}^3$ systems. Nevertheless, integrating WPT with $\textbf{SC}^3$ closed loops introduces new challenges. The harvested energy is inherently limited due to the substantial propagation loss over long distances from satellite to ground. Consequently, system resources, including energy, time, computing capability~\cite{jsac2}, and bandwidth~\cite{satellite2}, must be carefully orchestrated to enhance the performance of $\textbf{SC}^3$ closed loops.

In $\textbf{SC}^3$ closed loops, sensing, communication, computing, control, and energy transfer are tightly coupled. These processes share energy and time resources while serving the common objective of enhancing robotic task performance. However, most existing works on WPT focus on optimizing individual communication or computing modules without adequately capturing the interdependencies among $\textbf{SC}^3$ components. Motivated by these limitations, this paper adopts a holistic perspective to investigate the orchestration of communication, computing, and WPT within $\textbf{SC}^3$ closed loops. Specifically, we jointly optimize uplink/downlink transmit power, bandwidth allocation, computing capability, communication-computing time allocation, and WPT energy allocation, with the aim of enhancing overall closed-loop control performance.

\subsection{Related Works}
Networked control systems (NCSs), wherein control components are interconnected via communication networks, have garnered sustained attention as a pivotal integration of control and communications~\cite{NCS0}. From the control perspective, extensive research has characterized how imperfect communication, manifested as packet loss, delay, and limited data rates, impacts control performance, and has accordingly designed control policies under such constraints. For instance, \cite{NCS1} established that stabilizing a control system necessitates the data rate per control cycle to exceed the system's intrinsic entropy rate. The authors in \cite{LQR} derived the minimum average data rate required to achieve a specified linear quadratic regulator (LQR) cost, a standard metric for quantifying control performance in optimal control theory. In \cite{NCS2}, the tracking performance limitation (TPL) and regulation performance limitation (RPL) of NCSs were investigated under communication constraints including network-induced delay, packet loss, and limited bandwidth, with explicit expressions for both TPL and RPL derived. Furthermore, \cite{NCS3} proposed an event-triggered $H_{\infty}$ control strategy for NCSs accounting for network transmission delays.

From the communication perspective, prior works have examined transmission policies and resource allocation strategies with explicit consideration of control performance. Wang \textit{et al.} \cite{NCS4} devised a transmission scheduling policy to minimize infinite-horizon control cost in time-sensitive wireless NCSs. The authors in \cite{single_loop} introduced a novel metric termed closed-loop negentropy (CNE) to quantify closed-loop control performance, and maximized CNE by optimizing time and bandwidth allocation between uplink and downlink within an $\textbf{SC}^3$ closed loop. In \cite{NCS5}, an optimization-theoretic deep reinforcement learning (DRL) framework was proposed for the joint design of control and communication systems, wherein power consumption was minimized through joint optimization of the control sampling period alongside communication blocklength and packet error probability. The authors in \cite{NCS6} jointly optimized bandwidth, transmit time, and computing capability allocation across multiple $\textbf{SC}^3$ closed loops to minimize aggregate LQR cost. These works have underscored the importance of control-communication co-design. However, they typically assume sufficient or fixed energy supply for all devices. In energy-constrained scenarios, the additional coupling introduced by energy harvesting and energy-aware operation remains inadequately addressed.

Recently, WPT has emerged as a promising solution to address energy scarcity in IoT devices. Specifically, by harvesting energy from RF signals transmitted by power stations (PSs) or satellites, sensor operational lifetime can be significantly extended and sensing capabilities enhanced. In \cite{WP-s1}, the authors investigated a wireless-powered sensor network (WPSN) wherein sensors harvest energy from a hybrid access point and transmit information via non-orthogonal multiple access (NOMA). Energy efficiency was maximized through joint optimization of harvesting time and transmit power. The authors in \cite{WP-s2} employed an intelligent reflecting surface (IRS) to assist WPSN operation, maximizing sum throughput via joint optimization of phase shift matrices and transmission time allocation. In \cite{WP-s3}, the authors maximized system sum throughput under two deployment scenarios: one where the PS and sensors belong to the same service operator, and another where they belong to different operators. Nevertheless, these studies focus primarily on sensing and uplink communication, without holistic consideration of complete $\textbf{SC}^3$ closed loops.

Integrated sensing and communication (ISAC), which combines sensing and communication functionalities within a unified system, has emerged as a key enabling technology for 6G networks~\cite{ISAC1}. Recent research has explored the integration of WPT with ISAC. In \cite{WP_ISAC1}, a reconfigurable intelligent surface (RIS)-enabled wireless-powered communication network was investigated for ISAC applications, where the base station performs radar sensing for multiple targets while simultaneously delivering wireless power to energy-constrained devices. The sum rate was maximized through joint optimization of energy beamforming, radar beamforming, RIS phase shifts, and transmission time-slot allocation. The authors in \cite{WP_ISAC2} proposed a UAV-based ISAC architecture wherein multiple radar-equipped UAVs concurrently perform sensing tasks and provide wireless energy to communication users. A multi-objective optimization problem was formulated to enhance sensing and communication performance via joint design of UAV trajectories, radar transmit waveforms, receive filters, time scheduling, and uplink power allocation. Nevertheless, these works primarily focus on optimizing individual sensing and communication metrics, without addressing the closed-loop control performance central to $\textbf{SC}^3$ systems.

Beyond sensing-oriented applications, WPT has been extensively studied in conjunction with mobile edge computing (MEC) to support computation-intensive tasks. In \cite{WP-c1}, the authors investigated a UAV-enabled wireless-powered MEC system wherein the UAV is equipped with an RF energy transmitter and an MEC server to provide energy and computing services to ground users. The weighted sum computation bits were maximized through joint optimization of central processing unit (CPU) frequencies, offloading times, transmit powers, and UAV trajectory. The authors in \cite{WP-c2} proposed a deep reinforcement learning-based framework for offloading decision design in wireless-powered MEC networks, maximizing the weighted sum computation rate. Wang \textit{et al.} \cite{WP-c3} studied a wireless-powered multi-user MEC system where an access point integrates an MEC server with an RF energy transmitter to serve multiple users. Energy consumption was minimized via joint optimization of energy transmit beamforming, computing capability, offloading strategy, and inter-user time allocation. However, these works typically assume unlimited energy supply at the MEC server. In $\textbf{SC}^3$ closed loops, the computing unit is energy-constrained and must compete for energy harvesting with communications of both sensing data and control commands, introducing additional optimization complexity.

Recently, WPT has been investigated for energy-constrained control systems. In \cite{WP_NCS1}, the authors studied a control system wherein a control station transmits energy and control information to multiple control terminals. The maximum control cost across all terminals was minimized through joint optimization of power and time allocation, subject to transmission rate constraints. The authors in \cite{WP_NCS2} incorporated RF energy harvesting into wireless NCSs, formulating a joint energy harvesting, scheduling, and power control problem to maximize system adaptivity. In \cite{WP_NCS3}, the authors investigated optimal uplink-downlink scheduling in energy-harvesting wireless NCSs to minimize the expected cumulative infinite-horizon discounted cost. By formulating the problem as an infinite-horizon discounted Markov decision process, they revealed threshold-based scheduling behavior. Nevertheless, existing works primarily focus on partial resource optimization, without addressing energy allocation among sensing, communication, and computing processes within $\textbf{SC}^3$ closed loops.

To further enhance energy efficiency, simultaneous wireless information and power transfer (SWIPT), which enables concurrent power delivery and information transmission, has garnered significant attention as a potential energy-efficient solution for IoT devices~\cite{SWIPT0}. In \cite{SWIPT1}, the authors investigated three receiver architectures for SWIPT: time switching (TS), power splitting (PS), and on-off power splitting (OPS), deriving optimal transmission strategies to achieve rate-energy tradeoffs. \cite{SWIPT2} employed NOMA to improve spectral efficiency in TS-based SWIPT systems, proposing a joint power allocation and TS control algorithm to maximize system energy efficiency. In \cite{SWIPT3}, the authors investigated an RIS-assisted high-altitude platform (HAP) SWIPT network, utilizing a multi-layer refracting RIS receiver to mitigate severe fading induced by long transmission distances, and formulated a worst-case sum-rate maximization problem. Nevertheless, existing SWIPT research primarily focuses on communication-centric metrics, with limited consideration for closed-loop control requirements.

\subsection{Main Contributions and Organization}
Motivated by aforementioned limitations, this paper investigates a wireless-powered $\textbf{SC}^3$ system comprising multiple $\textbf{SC}^3$ closed loops, wherein a satellite delivers RF energy from space to both field sensors and a UAV-mounted edge information hub (EIH)~\cite{jsac2} to support control tasks. We jointly optimize transmit power, transmission bandwidth, computing capability, time allocation among communication and computing processes, as well as WPT power allocation, to improve closed-loop control performance. An efficient iterative algorithm is proposed to solve the resulting optimization problem, and the optimal solution is analytically characterized for the single-loop special case. The main contributions are summarized as follows.
\begin{itemize}
	\item We investigate a wireless-powered $\textbf{SC}^3$ system wherein a UAV equipped with an EIH assists multiple robots in performing control tasks. A satellite transmits RF energy from space to both field sensors and the EIH. Accounting for coupling among $\textbf{SC}^3$ components as well as the interplay between energy transfer and information transmission/computing, we jointly optimize transmit power, bandwidth allocation, computing capability, time allocation, and WPT power allocation to improve the overall control performance.
	\item We adopt the LQR cost as the control performance metric and formulate a sum LQR cost minimization problem. Due to the non-convexity arising from variable coupling, we introduce auxiliary variables and develop an efficient iterative algorithm based on sequential convex approximation (SCA). For the single-loop special case, we analytically characterize optimal solution properties in energy-limited scenarios to provide insights for practical parameter configuration.
	\item Simulation results demonstrate significant performance gains of the proposed scheme over baseline approaches, validating the superiority of a holistic design perspective for $\textbf{SC}^3$ systems.
\end{itemize}

The remainder of this paper is organized as follows. Section \ref{sec_system} introduces the system model for the considered wireless-powered $\textbf{SC}^3$ closed loops and presents the LQR cost minimization problem formulation. In Section \ref{sec_algorithm}, we recast the optimization problem into a more tractable form and propose an iterative algorithm for its solution. Section \ref{sec_analysis} characterizes the properties of optimal orchestration for the single-loop special case. Simulation results and further discussions are provided in Section \ref{sec_simulation}. Finally, Section \ref{sec_conclusion} concludes the paper.

\section{System Model and Problem Formulation}
\label{sec_system}

\begin{figure} [t]
	\centering
	\includegraphics[width=0.79\linewidth]{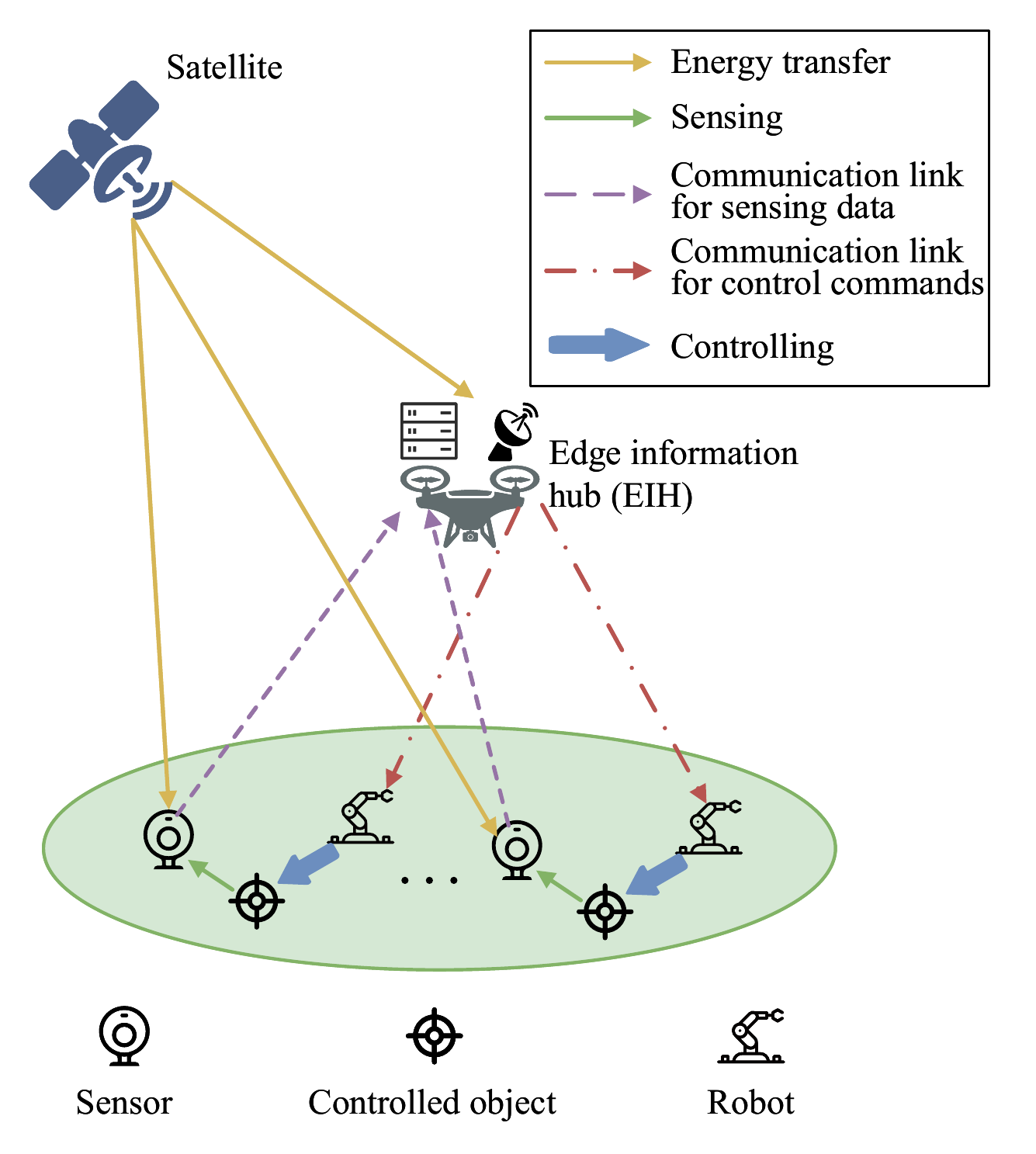}
	\caption{Illustration of a wireless-powered $\textbf{SC}^3$ system: a satellite transmits RF energy from space to multiple field sensors and the EIH; the EIH, acting as a neural center, interconnects sensors and robots to form $\textbf{SC}^3$ closed loops.}
	\label{fig:system}
\end{figure}

As shown in Fig. \ref{fig:system}, we consider a wireless-powered robotic control system comprising $K$ robots performing mission-critical tasks in remote areas, each assisted by a dedicated sensor. A UAV-mounted EIH integrates computing and communication modules to provide edge services for control tasks~\cite{jsac2}. During each control cycle, sensors collect environmental information and target status, transmitting the data to the EIH; the EIH subsequently processes the sensing data and generates control commands, which are then conveyed to field robots to guide their actions. This periodic process constitutes an  $\textbf{SC}^3$ closed loop. Concurrently, a satellite delivers RF energy from space to the $\textbf{SC}^3$
closed loops via wireless power transfer.

\subsection{Sensing and Communication Model}
At each control period, the sensor collects environmental information and target status, containing
both task-related information and redundant contents. We utilize the amount of sensing information per control cycle to measure the sensing capability, denoted as $D^{\text{s}}_k$ for the $k$-th sensor. After sensing, the sensor transmits its obtained data to the EIH for further analysis. The data rate from sensor $k$ to the EIH, or the uplink data rate, can be calculated as
\begin{align}
	R^{\text{u}}_k \left( p^{\text{u}}_k,B^{\text{u}}_k \right)  = B^{\text{u}}_k \log_2 \left(1+\frac{g^{\text{u}}_k p^{\text{u}}_k}{B^{\text{u}}_k  N_0} \right),
\end{align}
where $B^{\text{u}}_k$ denotes the bandwidth for the uplink transmission of sensor $k$, $g^{\text{u}}_k$ represents the channel gain, $p^{\text{u}}_k$ denotes the transmit power, and $N_0$ denotes the channel noise power spectral density.

Denoting the uplink transmission time as $t^{\text{u}}_k$, the data amount from the sensor to the EIH is constrained by the data rate, i.e.,
\begin{equation}
	D^{\text{u}}_k \leq t^{\text{u}}_k R^{\text{u}}_k\left( p^{\text{u}}_k,B^{\text{u}}_k \right).
\end{equation}
In addition, the energy consumption of sensor $k$ per control cycle can be calculated as
\begin{equation}\label{Eu}
	E^{\text{u}}_k = t^{\text{u}}_k p^{\text{u}}_k.
\end{equation}

\subsection{Computing Model}
After receiving the sensing data, the EIH processes the data to extract task-related information, analyzes the system state, and generates the control commands. Denoting the time for the computing process as $t^{\text{c}}_k$, the maximum processed data amount in each control cycle $D^{\text{c}}_k$ can be calculated as
\begin{equation}
	D^{\text{c}}_k \leq \frac{ t^{\text{c}}_k f_k}{\alpha},
\end{equation}
where $\alpha$ denotes the number of CPU cycles for processing the sensing data per bit, and $f_k$ denotes the scheduled computing capability (or CPU frequency) to process the data from sensor $k$.

We assume that each control task is processed by the EIH in a time-division manner, and the total computing time of all tasks should satisfy the constraint
\begin{equation}\label{tc}
	\sum_{k = 1}^K	t^{\text{c}}_k \leq  T,
\end{equation}
where $T$ denotes the time duration of one control cycle.

The energy consumption of the computing process is modeled as\footnote{This model primarily captures dynamic power consumption while neglecting static and leakage components, as dynamic power dominates in practical systems~\cite{MEC_e}. Notably, incorporating a frequency-independent static power term would not affect the convexity analysis or the validity of the proposed algorithm.}
\begin{equation}\label{Ec}
	E^{\text{c}}_k = \kappa t^{\text{c}}_k f_k^3,
\end{equation}
where $\kappa$ denotes the computation energy efficiency coefficient of the EIH. 

The computing process can be regarded as an information extraction process, and we denote the information extraction ratio of the computing process as $\rho$, i.e., only a fraction $\rho$ of the raw sensed data is task-relevant and can be eventually converted into useful control information. Next, the EIH transmits the output control commands to the robot to guide its actions. Similarly as the uplink, the data rate from the EIH to robot $k$, or the downlink data rate, can be calculated as
\begin{align}
	R^{\text{d}}_k\left( p^{\text{d}}_k,B^{\text{d}}_k\right)  = B^{\text{d}}_k \log_2 \left(1+\frac{g^{\text{d}}_k p^{\text{d}}_k}{B^{\text{d}}_k N_0} \right),
\end{align}
where $g^{\text{d}}_k$ denotes the downlink channel gain, $B^{\text{d}}_k$ denotes the downlink transmission bandwidth, and $p^{\text{d}}_k$ denotes the transmit power of the EIH. The corresponding data amount satisfies the following constraint
\begin{equation}
	D^{\text{d}}_k \leq t^{\text{d}}_k R^{\text{d}}_k \left( p^{\text{d}}_k, B^{\text{d}}_k \right),
\end{equation}
where $t^{\text{d}}_k$ denotes the downlink transmission time in loop $k$. And the energy consumption of the downlink communication process is
\begin{equation}\label{Ed}
	E^{\text{d}}_k = t^{\text{d}}_k p^{\text{d}}_k.
\end{equation}

The above process proceeds in sequence and shares the time resource, so we have the following time resource constraint
\begin{equation}
	t^{\text{u}}_k + t^{\text{c}}_k + t^{\text{d}}_k \leq T.
\end{equation}

Furthermore, uplink and downlink transmissions within the same $\textbf{SC}^3$ closed loop share bandwidth in a time-division duplex (TDD) manner, while transmissions across different $\textbf{SC}^3$
	closed loops occupy orthogonal frequency bands to avoid inter-loop interference. Accordingly, the bandwidth constraints are given by
\begin{align}
	&B^{\text{u}}_k \leq B_k, 	B^{\text{d}}_k \leq B_k,\\
	&\sum_{k = 1}^{K} B_k \leq B_{\text{max}},
\end{align}
where $B_k$ denotes the allocated bandwidth for loop $k$, and $B_{\text{max}}$ denotes the maximum available bandwidth for all loops.

\subsection{Control Model}
In this paper, the robotic task is modelled as a linear control process, and the state transition equation of the $k$-th robotic task in cycle $t$ is given by\cite{NCS1}
\begin{equation}\label{system}
	\mathbf{x}_{k,t+1} = \mathbf{\Phi}_k\mathbf{x}_{k,t}+\mathbf{\Gamma}_k\mathbf{u}_{k,t}+\mathbf{v}_{k,t},
\end{equation}
where $\mathbf{x}_{k,t}\in \mathbb{R}^{n_k}$ denotes the system state, $\mathbf{u}_{k,t}\in \mathbb{R}^{m_k}$ denotes control input, and $\mathbf{v}_{k,t}\in \mathbb{R}^{n_k}$ denotes the Gaussian control noise with mean zero and covariance $\mathbf{V}_k$. The matrices $\mathbf{\Phi}_k\in \mathbb{R}^{n_k\times n_k}$ and $\mathbf{\Gamma}_k\in \mathbb{R}^{n_k\times m_k}$ denote the state transition matrix and control input matrix, respectively. 

Inspired by the optimal control theory, the control performance can be measured by the LQR cost, defined as\cite{LQR}
\begin{equation}\label{LQR}
	l_k \triangleq \lim\limits_{N\rightarrow \infty}\mathbb{E} \left[ \frac{1}{N}\sum_{t = 1}^{N} \left(\mathbf{x}_{k,t}^\text{T}\mathbf{Q}_k\mathbf{x}_{k,t} +\mathbf{u}_{k,t}^\text{T}\mathbf{R}_k\mathbf{u}_{k,t}\right) \right],
\end{equation}
where $\mathbf{Q}_k$ and $\mathbf{R}_k$ are weight parameter matrices, which can be set according to practical requirements. The term $\mathbf{x}_{k,t}^\text{T}\mathbf{Q}_k\mathbf{x}_{k,t}$ measures the deviation of the system from desired zero state, and the term $\mathbf{u}_{k,t}^\text{T}\mathbf{R}_k\mathbf{u}_{k,t}$ evaluates the control energy consumption. A smaller LQR cost indicates a better control performance.

According to \cite{LQR}, to achieve an objective LQR cost $l_k$, the average end-to-end information amount transmitted from the sensor to the robot, denoted as $D_k$, must satisfy the following constraint
\begin{equation}\label{CNE_LQR}
	D_k \geq \log_2 |\det \mathbf{\Phi}_k|  + \frac{n_k}{2} \log_2 \left( 1+ \frac{ n_k N \left( \mathbf{v}_k\right)|\det  \mathbf{M}_k|^\frac{1}{n_k} }{l_k-\text{tr}\left( \mathbf{V}_k\mathbf{S}_k\right)} \right),
\end{equation}
where $\log_2 |\det \mathbf{\Phi}_k|$ denotes the intrinsic entropy rate of control system $k$, $N\left( \mathbf{v}_k\right) \triangleq \frac{1}{2\pi e}\exp \left( \frac{2}{n_k} h\left(  \mathbf{v}_k \right)  \right) $ denotes the entropy power of $\mathbf{v}_k$, $h\left(  \mathbf{v}_k \right)$ is its differential entropy, and $\mathbf{M}_k$ and $\mathbf{S}_k$ are the solutions to the Riccati equations about the control parameters, seen in \cite{LQR}. $D_k$ measures the information entropy transmitted through the $k$-th $\textbf{SC}^3$ closed loop per cycle, which is referred to as the CNE~\cite{single_loop}. Specifically, the volume of raw data that can be processed is constrained by the bottleneck among sensing, uplink transmission, and computing capabilities, i.e., $\min\{D_k^{\text{s}},D_k^{\text{u}},D_k^{\text{c}}\}$, where $D_k^{\text{s}}$ denotes the raw sensing data generated by sensor $k$ per control cycle. Since only a fraction $\rho\in(0,1)$ of the raw data is task-relevant and convertible into actionable control information, the effective computing output is given by $\rho\cdot\min\{D_k^{\text{s}},D_k^{\text{u}},D_k^{\text{c}}\}$. Accounting for downlink constraints, the CNE is formulated as
\begin{equation}\label{CNE}
	D_k=\min\left\{\rho D_k^{\text{s}},\,\rho D_k^{\text{u}},\,\rho D_k^{\text{c}},\,D_k^{\text{d}}\right\}.
\end{equation}

\subsection{Energy Transfer Model}
During control task execution, the satellite delivers wireless power to sensors and the EIH via RF signals from space. We assume that the satellite employs multiple beams to transfer energy simultaneously to sensors and the EIH, with energy transfer power for sensor $k$ and the EIH denoted by $P_{k}^{\text{E,s}}$ and $P^{\text{E,e}}$, respectively. The corresponding power constraint is given by
\begin{equation}\label{tec}
\sum_{k = 1}^{K} P_{k}^{\text{E},\text{s}}+P^{\text{E},\text{e}} \leq P_0,
\end{equation}
where $P_0$ denotes the maximum energy transfer power of the satellite.

We assume that WPT and information transmission occupy non-overlapping frequency bands. Moreover, both the sensors and the EIH are equipped with separate antennas for energy harvesting and communications. Consequently, energy harvesting and information transmission/reception can be performed concurrently at both sensors and the EIH, and the mutual interference is negligible.

To maintain the $\textbf{SC}^3$ closed loop, the harvested energy of the sensor and EIH should be larger than the energy consumed per control cycle, and we have the following energy constraints
\begin{align}\label{Econ}
	&\eta^{\text{s}}_k P_{k}^{\text{E},\text{s}} T \geq E^{\text{u}}_k,\\
	&\eta^{\text{e}} P^{\text{E},\text{e}} T \geq \sum_{k = 1}^{K} \left( E^{\text{c}}_k+E^{\text{d}}_k \right) ,
\end{align}
where $\eta^{\text{s}}_k$ and $\eta^{\text{e}}$ denote the energy transfer efficiency from the satellite to sensor $k$ and the EIH, respectively, encompassing the aggregate effects of long-distance propagation attenuation, antenna gains, RF-to-DC conversion efficiency, and other implementation losses. {Specifically, $\eta^{\text{s}}_k$ and $\eta^{\text{e}}$ can be approximately expressed as~\cite{WPT_ref}
\begin{align}
	&\eta^{\text{s}}_k = \eta^{\text{s}}_{\text{b},k}\eta^{\text{s}}_{\text{atm},k}\eta^{\text{s}}_{\text{rf-dc},k}\eta^{\text{s}}_{\text{add},k},\\
	&\eta^{\text{e}} = \eta^{\text{e}}_{\text{b}}\eta^{\text{e}}_{\text{atm}}\eta^{\text{e}}_{\text{rf-dc}}\eta^{\text{e}}_{\text{add}},
	\end{align}
where $\eta^{\text{s}}_{\text{b},k}$ and $\eta^{\text{e}}_{\text{b}}$ denote the beam collection efficiencies, $\eta^{\text{s}}_{\text{atm},k}$ and $\eta^{\text{e}}_{\text{atm}}$ denote the atmospheric transmission efficiencies, $\eta^{\text{s}}_{\text{rf-dc},k}$ and $\eta^{\text{e}}_{\text{rf-dc}}$ represent the RF-to-DC conversion efficiencies, and $\eta^{\text{s}}_{\text{add},k}$ and $\eta^{\text{e}}_{\text{add}}$ represent additional losses such as beam pointing error and polarization mismatch. Moreover, the beam collection efficiencies can be approximated by\cite{WPT_ref}
\begin{align}
	&\eta^{\text{s}}_{\text{b},k} = 1 - \exp \left( -\frac{A_{\text{T}} A^{\text{s}}_{\text{R},k}}{\left(\lambda  d^{\text{s}}_{k}\right) ^2}\right) ,\label{eta_bs}\\
	&\eta^{\text{e}}_{\text{b}} = 1 - \exp \left( -\frac{A_{\text{T}} A^{\text{e}}_{\text{R}}}{\left(\lambda  d^{\text{e}}\right) ^2}\right) ,\label{eta_be}
	\end{align}
where $A_{\text{T}}$ denotes the transmit aperture area, $A^{\text{s}}_{\text{R},k}$ and $A^{\text{e}}_{\text{R}}$ denote the receive aperture areas of the sensors and EIH, $d^{\text{s}}_{k}$ and $d^{\text{e}}$ denote the satellite-to-sensor and satellite-to-EIH distances, and $\lambda$ represents the wavelength of the WPT beam.}

\subsection{Problem Formulation}
In this paper, we aim to minimize the sum LQR cost of the whole system by optimizing the uplink/downlink transmit power $\mathbf{p}^{\text{u/d}}= \left\{p^{\text{u/d}}_k\right\}$, the uplink/downlink transmission bandwidth $\mathbf{B}^{\text{u/d}}= \left\{B^{\text{u/d}}_k\right\}$, the computing frequency  $\mathbf{f} = \left\{f_{k}\right\}$, the time allocation for the uplink, downlink, and computing process, as well as the energy transfer power allocation. The LQR minimization problem is formulated as
\begin{subequations}\label{P1}
	\begin{align} \min\limits_{\substack{\mathbf{f},\mathbf{p}^{\text{u}},\mathbf{B}^{\text{u}},\mathbf{t}^{\text{u}},\mathbf{P}^{\text{E},\text{s}}\\\mathbf{t}^{\text{c}},\mathbf{p}^{\text{d}},\mathbf{B}^{\text{d}},\mathbf{t}^{\text{d}},P^{\text{E},\text{e}}}} \quad &\sum_{k = 1}^K l_k \\
		\text{s.t.} \quad &D_{k} \geq \log_2 |\det \mathbf{\Phi}_k| \nonumber\\ &\qquad +\frac{n_k}{2} \log_2 \left( 1  \! + \! \frac{ n_k N \left(  \! \mathbf{v}_k \! \right)|\det  \mathbf{M}_k|^\frac{1}{n_k} }{l_k-\text{tr}\left( \mathbf{V}_k\mathbf{S}_k\right)} \right),\label{P1b}\\
		&D_k \leq \min \left\{\rho	D^\text{s}_k,\rho	D^{\text{u}}_k,\rho	D^{\text{c}}_k ,	D^{\text{d}}_k \right\},\forall k,\\
		&D^{\text{u}}_k = t^{\text{u}}_k R^{\text{u}}_k \left( p^{\text{u}}_k, B^{\text{u}}_k \right),\forall k,\\
		&D^{\text{c}}_k = \frac{t^{\text{c}}_k f_k}{\alpha },\forall k\\
		&D^{\text{d}}_k = t^{\text{d}}_k R^{\text{d}}_k \left( p^{\text{d}}_k,B^{\text{d}}_k \right),\forall k,\label{P1c}\\
		&B^{\text{u}}_k \leq B_k, 	B^{\text{d}}_k \leq B_k,\\
		&\sum_{k = 1}^{K}B_k \leq B_\text{max},\label{P1h}\\
		&t^{\text{u}}_k + t^{\text{c}}_k + t^{\text{d}}_k \leq T,\forall k,\label{P1i}\\
		&\sum_{k = 1}^{K} P_{k}^{\text{E},\text{s}}+P^{\text{E},\text{e}} \leq P_0,\\
		&\sum_{k = 1}^K	t^{\text{c}}_k \leq  T,\label{P1k}\\
		&t^{\text{u}}_k p^{\text{u}}_k\leq \eta^{\text{s}}_k P_{k}^{\text{E},\text{s}} T ,\forall k, \label{P1l}\\
		&\sum_{k = 1}^K \left( \kappa t^{\text{c}}_k f_k^3+t^{\text{d}}_k p^{\text{d}}_k\right) \leq \eta^{\text{e}} P^{\text{E},\text{e}} T,\label{P1m}\\
		&p^{\text{u}}_k\leq P_{\text{umax}}, p^{\text{d}}_k\leq P_{\text{dmax}}, \label{f12}\\
		&f_k\leq F_{\max}, \label{f13}
	\end{align}
\end{subequations}
where $\mathbf{t}^{\text{u}}= \left\{t^{\text{u}}_k\right\}$, $\mathbf{t}^{\text{c}}= \left\{t^{\text{c}}_k\right\}$, $\mathbf{t}^{\text{d}}= \left\{t^{\text{d}}_k\right\}$, and $\mathbf{P}^{\text{E,s}}= \left\{P^{\text{E,s}}_k\right\}$ represent the corresponding allocation variables,  $P_{\text{umax}}$ and $P_{\text{dmax}}$ denote the maximum uplink and downlink transmit power, respectively, and $F_{\max}$ denotes the maximum CPU frequency of the EIH. The above formulation adopts several idealized assumptions to focus on the essential interaction among WPT, communication, computing, and control in the considered $\textbf{SC}^3$ closed loops. Despite these abstractions, the formulated sum LQR cost minimization problem is still a non-convex optimization problem due to the coupling among the variables across multiple $\textbf{SC}^3$ closed loops. In the next section, we transform it to a more tractable form, and propose an iterative algorithm to solve it.

\section{Problem Transformation and Iterative Approach}
\label{sec_algorithm}
\subsection{Problem Transformation}
First, the bandwidth within each $\textbf{SC}^3$ closed loop should be fully utilized, as improving the bandwidth can achieve high data rate with no increase of energy. Therefore, we have $ B^{\text{u}}_k =B^{\text{d}}_k  =B_k $.

To handle the non-convex coupling between the transmit power and time, we introduce new variables $E^{\text{u}}_k$, $E^{\text{c}}_k$, and $E^{\text{d}}_k$, denoting the energy used in the uplink, computing, and downlink processes of $\textbf{SC}^3$ closed loop. Then the power and computing capability can be expressed in terms of the energy and time based on \eqref{Eu}, \eqref{Ec}, and \eqref{Ed}, as follows
\begin{align}
	&p^{\text{u}}_k = \frac{E^{\text{u}}_k}{t^{\text{u}}_k},\label{puk}\\
	&p^{\text{d}}_k = \frac{E^{\text{d}}_k}{t^{\text{d}}_k},\label{pdk}\\
	&f_k = \left( \frac{E^{\text{c}}_k}{\kappa t^{\text{c}}_k}\right) ^{1/3}.\label{fk}
\end{align}

In addition, the constraints \eqref{P1l} and \eqref{P1m} should hold with equality to avoid the waste of the transferred energy. Therefore, the energy transfer power can be expressed with $E^{\text{u}}_k$, $E^{\text{c}}_k$, and $E^{\text{d}}_k$ as 
\begin{align}
	& P_{k}^{\text{E},\text{s}} = \frac{E^{\text{u}}_k}{T \eta^{\text{s}}_k},\label{tEs}\\
	& P^{\text{E},\text{e}} = \sum_{k = 1}^{K}\frac{E^{\text{c}}_k+E^{\text{d}}_k}{T \eta^{\text{e}}}.\label{tEe}
\end{align}

Based on the above substitution, the original optimization problem \eqref{P1} can be rewritten as
\begin{subequations}\label{P2}
	\begin{align} \min\limits_{\substack{\mathbf{E}^{\text{c}},\mathbf{E}^{\text{u}},\mathbf{B},\mathbf{t}^{\text{u}}\\\mathbf{t}^{\text{c}},\mathbf{E}^{\text{d}},\mathbf{t}^{\text{d}}}} \quad &\sum_{k = 1}^K l_k  \\
		\text{s.t.} \quad &D_k \leq \min \left\{\rho	D^\text{s}_k,\rho	D^{\text{u}}_k,\rho	D^{\text{c}}_k ,	D^{\text{d}}_k \right\},\forall k,\label{P2b}\\
		&D^{\text{u}}_k = t^{\text{u}}_k  B_k \log_2 \left(1+\frac{g^{\text{u}}_k E^{\text{u}}_k}{ t^{\text{u}}_k B_k N_0} \right),\forall k,\label{P2c}\\
		&D^{\text{c}}_k =  t^{\text{c}}_k \left( \frac{E^{\text{c}}_k}{\alpha^3\kappa t^{\text{c}}_k}\right) ^{1/3},\forall k,\label{P2d}\\
		&D^{\text{d}}_k = t^{\text{d}}_k  B_k \log_2 \left(1+\frac{g^{\text{d}}_k E^{\text{d}}_k}{ t^{\text{d}}_k B_k N_0} \right),\forall k,\label{P2e}\\
		&\sum_{k = 1}^{K} \left( \frac{E^{\text{u}}_k}{\eta^{\text{s}}_k}+\frac{E^{\text{c}}_k+E^{\text{d}}_k}{\eta^{\text{e}}}\right)  \leq P_0 T,\label{P2f}\\
		&\frac{E^{\text{u}}_k}{t^{\text{u}}_k}\leq P_{\text{umax}}, \frac{E^{\text{d}}_k}{t^{\text{d}}_k}\leq P_{\text{dmax}}, \label{P2g}\\
		&\left( \frac{E^{\text{c}}_k}{\kappa t^{\text{c}}_k}\right) ^{1/3}\leq F_{\max}, \label{P2h}\\
		&\eqref{P1b},\eqref{P1h}-\eqref{P1i},\eqref{P1k},\nonumber
	\end{align}
\end{subequations}
where $\mathbf{B}= \left\{B_k\right\}$ denotes the bandwidth allocation among $\textbf{SC}^3$ closed loops.

We further introduce slack variables $\mathbf{s}^{\text{u/d}}= \left\{s^{\text{u/d}}_k\right\}$ to handle the bilinear products $t^{\text{u}}_k B_k$ and $t^{\text{d}}_k B_k$ in \eqref{P2c} and \eqref{P2e}, and reformulate the problem in \eqref{P2} as follows
\begin{subequations}\label{P3}
	\begin{align} \min\limits_{\substack{\mathbf{E}^{\text{c}},\mathbf{E}^{\text{u}},\mathbf{B},\mathbf{t}^{\text{u}},\mathbf{s}^{\text{u}}\\\mathbf{t}^{\text{c}},\mathbf{E}^{\text{d}},\mathbf{t}^{\text{d}},\mathbf{s}^{\text{d}}}} \quad &\sum_{k = 1}^K l_k  \\
		\text{s.t.} \quad &D_k \leq \min \left\{\rho	D^\text{s}_k,\rho	D^{\text{u}}_k,\rho	D^{\text{c}}_k ,	D^{\text{d}}_k \right\},\forall k,\label{P3b}\\
		&D^{\text{u}}_k = s^{\text{u}}_k \log_2 \left(1+\frac{g^{\text{u}}_k E^{\text{u}}_k}{ s^{\text{u}}_k N_0} \right),\forall k,\label{P3c}\\
		&s^{\text{u}}_k \leq t^{\text{u}}_k B_k,\label{P3d}\\
		&D^{\text{c}}_k =  t^{\text{c}}_k \left( \frac{E^{\text{c}}_k}{\alpha^3\kappa t^{\text{c}}_k}\right) ^{1/3},\forall k,\label{P3e}\\
		&D^{\text{d}}_k = s^{\text{d}}_k \log_2 \left(1+\frac{g^{\text{d}}_k E^{\text{d}}_k}{ s^{\text{d}}_k N_0} \right),\forall k,\label{P3f}\\
		&s^{\text{d}}_k \leq t^{\text{d}}_k  B_k,\label{P3h}\\
		&\eqref{P1b},\eqref{P1h}-\eqref{P1i},\eqref{P1k},\eqref{P2f}-\eqref{P2h}.\nonumber
	\end{align}
\end{subequations}
The above problem is equivalent to \eqref{P2}, because the data throughput $D^{\text{u}}_k$ and $D^{\text{d}}_k$ are monotonically increasing with respect to $\mathbf{s}^{\text{u}}$ and $\mathbf{s}^{\text{d}}$, respectively, and the constraints \eqref{P3d} and \eqref{P3h} must hold with equality to maximize the CNE $D_k$, therefore minimizing the LQR cost. However, \eqref{P3} is still non-convex. In the next subsection, we propose an iterative algorithm to solve \eqref{P3} based on the SCA method.

\subsection{Iterative Algorithm to Solve \eqref{P3}}
To handle the non-convex constraints \eqref{P3d} and \eqref{P3h}, we first rewrite them in the following equivalent form as
\begin{align}
	&\ln s^{\text{u}}_k - \ln t^{\text{u}}_k - \ln B_k\leq 0,\label{sku}\\
	&\ln s^{\text{d}}_k - \ln t^{\text{d}}_k - \ln B_k\leq 0.\label{skd}
\end{align}

Next, we approximate the left sides of \eqref{sku} and \eqref{skd} with convex functions based on the Taylor expansion. Specifically, the first-order Taylor expansion of $\ln s^{\text{u}}_k$ and $\ln s^{\text{d}}_k$ at the point $s^{\text{u}}_{k,0}$ and $s^{\text{d}}_{k,0}$ are
\begin{align}
	&\frac{s^{\text{u}}_k - s^{\text{u}}_{k,0}}{ s^{\text{u}}_{k,0}}+\ln s^{\text{u}}_{k,0}\geq \ln s^{\text{u}}_k,\label{ineq1}\\
	&\frac{s^{\text{d}}_k - s^{\text{d}}_{k,0}}{ s^{\text{d}}_{k,0}}+\ln s^{\text{d}}_{k,0}\geq \ln s^{\text{d}}_k,\label{ineq2}
\end{align}
where the inequalities are obtained based on the fact that the first-order Taylor expansion of a concave function is its upper bound.

By approximating $\ln s^{\text{u}}_k$ and $\ln s^{\text{d}}_k$ with their Taylor expansions, we propose an iterative algorithm to solve \eqref{P3} based on the SCA method. Specifically, during the $i$-th iteration, we solve the following problem
\begin{subequations}\label{P4}
	\begin{align} \min\limits_{\substack{\mathbf{E}^{\text{c}},\mathbf{E}^{\text{u}},\mathbf{B},\mathbf{t}^{\text{u}},\mathbf{s}^{\text{u}}\\\mathbf{t}^{\text{c}},\mathbf{E}^{\text{d}},\mathbf{t}^{\text{d}},\mathbf{s}^{\text{d}}}} \quad &\sum_{k = 1}^K l_k  \\
		\text{s.t.} \quad &\frac{s^{\text{u}}_k - s^{\text{u}}_{k,i-1}}{ s^{\text{u}}_{k,i-1}}+\ln s^{\text{u}}_{k,i-1} - \ln t^{\text{u}}_k - \ln B_k\leq 0,\label{P4e}\\
		&\frac{s^{\text{d}}_k - s^{\text{d}}_{k,i-1}}{ s^{\text{d}}_{k,i-1}}+\ln s^{\text{d}}_{k,i-1} - \ln t^{\text{d}}_k - \ln B_k\leq 0,\label{P4h}\\
		&\eqref{P1b},\eqref{P1h}-\eqref{P1i},\eqref{P1k},\eqref{P2f}-\eqref{P2h}\nonumber,\\
		&\eqref{P3b}-\eqref{P3c},\eqref{P3e}-\eqref{P3f},\nonumber
	\end{align}
\end{subequations}
where $s^{\text{u}}_{k,i-1}$ and $s^{\text{d}}_{k,i-1}$ denote the values of $s^{\text{u}}$ and $s^{\text{d}}$ by solving \eqref{P4} in the $(i-1)$-th iteration. 

\begin{proposition}\label{prop1}
	The problem \eqref{P4} is a convex optimization problem.
\end{proposition}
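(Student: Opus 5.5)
To prove Proposition \ref{prop1}, I will verify one at a time that the objective of \eqref{P4} is convex (it is linear) and that every constraint defines a convex set in the joint variables $(\mathbf{E}^{\text{c}},\mathbf{E}^{\text{u}},\mathbf{B},\mathbf{t}^{\text{u}},\mathbf{s}^{\text{u}},\mathbf{t}^{\text{c}},\mathbf{E}^{\text{d}},\mathbf{t}^{\text{d}},\mathbf{s}^{\text{d}})$ together with $D_k$ and $l_k$. Each constraint will be written as $g\le 0$ with $g$ convex. Equality constraints such as \eqref{P3c}, \eqref{P3e}, \eqref{P3f} will be relaxed to $D^{\text{u}}_k\le\cdot$, $D^{\text{c}}_k\le\cdot$, $D^{\text{d}}_k\le\cdot$. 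This relaxation is without loss, since $D^{\text{u}}_k,D^{\text{c}}_k,D^{\text{d}}_k$ appear only as upper bounds on $D_k$ in \eqref{P3b}. Likewise, $\min\{\cdot\}$ in \eqref{P3b} will be split into four separate inequalities $D_k\le \rho D^{\text{s}}_k$, $D_k\le\rho D^{\text{u}}_k$, $D_k\le\rho D^{\text{c}}_k$, $D_k\le D^{\text{d}}_k$, all of which are linear.

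The affine and trivially convex constraints come first.
\begin{itemize}
\item \eqref{P1h}, \eqref{P1i}, \eqref{P1k}, \eqref{P2f} are linear.
\item \eqref{P2g} can be rewritten as $E^{\text{u}}_k\le P_{\text{umax}}t^{\text{u}}_k$ and $E^{\text{d}}_k\le P_{\text{dmax}}t^{\text{d}}_k$, which are linear.
\item \eqref{P2h} is equivalent to $E^{\text{c}}_k\le \kappa F_{\max}^3 t^{\text{c}}_k$, which is also linear.
\item In \eqref{P4e} and \eqref{P4h}, $s^{\text{u}}_k$ (resp. $s^{\text{d}}_k$) enters linearly, since $s_{k,i-1}$ is a constant, and $-\ln t^{\text{u}}_k-\ln B_k$ is convex because $\ln$ is concave. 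Hence the left-hand sides are convex.
\end{itemize}

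Next come the rate-type constraints. The function $x\log_2(1+a y/x)$ with $a>0$ is the perspective of the concave function $y\mapsto\log_2(1+ay)$, so it is jointly concave in $(x,y)$ for $x>0$. Hence $D^{\text{u}}_k - s^{\text{u}}_k\log_2(1+g^{\text{u}}_kE^{\text{u}}_k/(s^{\text{u}}_kN_0))\le 0$ is a convex constraint, and the downlink constraint is handled the same way. For computing, $t^{\text{c}}_k(E^{\text{c}}_k/(\alpha^3\kappa t^{\text{c}}_k))^{1/3}=(\alpha^3\kappa)^{-1/3}(t^{\text{c}}_k)^{2/3}(E^{\text{c}}_k)^{1/3}$. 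This is a weighted geometric mean with exponents summing to one, equivalently the perspective of the concave function $y^{1/3}$, so it is jointly concave and $D^{\text{c}}_k\le(\cdot)$ is convex.

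The step I expect to be the main obstacle is the LQR constraint \eqref{P1b}. Writing $c_k=n_kN(\mathbf{v}_k)|\det\mathbf{M}_k|^{1/n_k}>0$ and $y_k=l_k-\text{tr}(\mathbf{V}_k\mathbf{S}_k)>0$, the right-hand side is $\log_2|\det\mathbf{\Phi}_k|+\frac{n_k}{2}\log_2(1+c_k/y_k)$. I will show that $h(y)=\log(1+c/y)=\log(y+c)-\log y$ is convex on $y>0$ by computing
\[
h''(y)=\frac{1}{y^2}-\frac{1}{(y+c)^2}>0 .
\]
Since $y_k$ is affine in $l_k$, the constraint $\text{RHS}(l_k)-D_k\le 0$ is convex. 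All constraints are therefore convex and the objective $\sum_k l_k$ is linear, so \eqref{P4} is a convex optimization problem. Throughout, I will note that the implicit domain constraints ($t,s,B,E>0$ and $l_k>\text{tr}(\mathbf{V}_k\mathbf{S}_k)$) are open convex sets.
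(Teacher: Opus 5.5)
Your proof is correct and follows essentially the same route as the paper: the remaining constraints are rewritten as affine constraints, \eqref{P4e} and \eqref{P4h} are handled via concavity of $\ln$, and the perspective-of-a-concave-function argument gives concave uplink, computing and downlink throughputs. Relaxing the throughput equalities to inequalities and checking $h''(y)>0$ for $\log(1+c/y)$ in \eqref{P1b} just make explicit two steps the paper states without detail.
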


\begin{proof}
	First, the constraints in \eqref{P1h}-\eqref{P1i}, \eqref{P1k}, \eqref{P2f} are all affine, and the constraints in \eqref{P2g}–\eqref{P2h} can be rewritten as
	\begin{align}
		&E^{\text{u}}_k\leq P_{\text{umax}}t^{\text{u}}_k,\\
		&E^{\text{d}}_k\leq P_{\text{dmax}}t^{\text{d}}_k,\\
		& E^{\text{c}}_k\leq F_{\max}^3\kappa t^{\text{c}}_k,
	\end{align}
	which are also affine constraints. In addition, it is not difficult to show that the constraints \eqref{P1b}, \eqref{P4e} and \eqref{P4h} are convex due to the concavity of logarithmic functions.
	
	Next, we show that the functions described in \eqref{P3c}, \eqref{P3e}, and \eqref{P3f} are concave. To this end, we utilize a fundamental conclusion in convex optimization theory that the perspective of a concave function is also concave, where the perspective of a function $f\left( x \right) $ is defined as $g \left( t, y \right)  = t f\left( y/t \right)$~\cite{cvx}. We can see that the functions described in \eqref{P3c}, \eqref{P3e}, and \eqref{P3f} are exactly the perspectives of the functions $R^{\text{u}}_k\left( x\right)$, $\frac{1}{\alpha}\left( \frac{x}{\kappa}\right) ^{1/3}$, and $R^{\text{d}}_k\left( x\right)$, respectively. It is not difficult to show that the three original functions are concave functions. Therefore, their perspectives are concave in the corresponding time–energy variables, i.e, the terms $D^{\text{u}}_k$, $D^{\text{c}}_k$, and $D^{\text{d}}_k$ in \eqref{P3b} are concave. As the pointwise minimum of multiple concave functions is also a concave function, we can conclude that the right side of \eqref{P3b} is concave. Based on the above analysis, all the constraints of \eqref{P4} are convex, leading to a convex feasible set of \eqref{P4}. In addition, the objective function of \eqref{P4} is affine. Therefore, \eqref{P4} is a convex optimization problem.
\end{proof}

Based on the above proposition, the problem \eqref{P4} is a convex optimization problem, which can be solved based on the standard convex optimization methods, such as the interior point method. By solving the approximate problem \eqref{P4} iteratively, we can obtain a locally  optimal solution to the original problem \eqref{P1}, which is summarized as \textbf{Algorithm \ref{Algo1}}. The convergence of this algorithm can be demonstrated with the following proposition.

\begin{proposition}\label{prop2}
	The output solution of \textbf{Algorithm \ref{Algo1}} is feasible to the optimization problem in \eqref{P3}. In addition, \textbf{Algorithm \ref{Algo1}} is assured to converge.
\end{proposition}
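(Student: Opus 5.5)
The argument is the standard SCA inner-approximation one, and I will prove the two claims in order: first feasibility of every iterate for \eqref{P3}, then monotone decrease of the objective, which together with a lower bound gives convergence.

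\emph{Feasibility.} Let $\mathcal{F}_i$ denote the feasible set of \eqref{P4} at iteration $i$. Constraints \eqref{P4e} and \eqref{P4h} are obtained from \eqref{sku} and \eqref{skd} by replacing $\ln s^{\text{u}}_k$ and $\ln s^{\text{d}}_k$ with their first-order Taylor expansions, which are upper bounds by \eqref{ineq1}--\eqref{ineq2}. Hence any point satisfying \eqref{P4e} satisfies
\begin{equation*}
\ln s^{\text{u}}_k - \ln t^{\text{u}}_k - \ln B_k \leq \frac{s^{\text{u}}_k - s^{\text{u}}_{k,i-1}}{s^{\text{u}}_{k,i-1}}+\ln s^{\text{u}}_{k,i-1} - \ln t^{\text{u}}_k - \ln B_k \leq 0,
\end{equation*}
and therefore $s^{\text{u}}_k\le t^{\text{u}}_k B_k$. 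The same holds for the downlink. All other constraints of \eqref{P4} coincide with those of \eqref{P3}, so $\mathcal{F}_i$ is contained in the feasible set of \eqref{P3}. Every iterate is thus feasible for \eqref{P3}, and in particular so is the output. I will also note that the algorithm is initialized with a point feasible for \eqref{P3}. Such a point is easy to construct: take small equal energies and times with $s_k=t_kB_k$.

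\emph{Monotonicity.} At the expansion point $s_{k,i-1}$ the Taylor bound is tight, so the previous solution $\mathbf{x}_{i-1}$ satisfies \eqref{P4e}--\eqref{P4h} at iteration $i$. The reason is that at that point the left side reduces to $\ln s_{k,i-1}-\ln t_{k,i-1}-\ln B_{k,i-1}\le 0$, which holds because $\mathbf{x}_{i-1}$ is feasible for \eqref{P3}. Hence $\mathbf{x}_{i-1}\in\mathcal{F}_i$. Since \eqref{P4} is convex by Proposition~\ref{prop1} and is solved optimally, the objective satisfies $\sum_k l_{k}^{(i)}\le \sum_k l_k^{(i-1)}$. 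The sequence of objective values is therefore non-increasing.

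\emph{Bounded below.} From \eqref{P1b}, each $l_k$ must exceed $\text{tr}(\mathbf{V}_k\mathbf{S}_k)$, because the log argument must be finite and positive. So the objective is bounded below by $\sum_k \text{tr}(\mathbf{V}_k\mathbf{S}_k)$. A non-increasing sequence bounded below converges, which gives convergence of the algorithm in the objective value.

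The step I expect to be the main obstacle is the second one. It requires that the previous iterate be feasible for the new subproblem, which needs both the tightness of the expansion at $s_{k,i-1}$ and the equivalence of \eqref{P3d}--\eqref{P3h} with \eqref{sku}--\eqref{skd}. That equivalence requires $s_k,t_k,B_k>0$, so that the logarithms are defined. I would handle this by restricting to strictly positive variables, or equivalently by noting that zero time or bandwidth yields zero CNE and is never optimal. It also implicitly requires that \eqref{P4} be solved to global optimality, which Proposition~\ref{prop1} guarantees.
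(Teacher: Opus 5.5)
Your proposal is correct and follows the paper's proof in Appendix~\ref{Appendix1} essentially step for step. The Taylor upper bound on $\ln s_k$ makes every solution of \eqref{P4} feasible for \eqref{P3}, tightness at $s_{k,i-1}$ makes the previous iterate feasible for the next subproblem so that $L^{i}\le L^{i-1}$, and your explicit lower bound fills in a step the paper leaves implicit. Two side remarks are slightly off, though neither affects the argument: that bound should rest on the domain $l_k>\mathrm{tr}(\mathbf{V}_k\mathbf{S}_k)$ implicit in \eqref{P1b}, because positivity of the log argument alone would also admit $l_k<\mathrm{tr}(\mathbf{V}_k\mathbf{S}_k)-n_kN(\mathbf{v}_k)|\det\mathbf{M}_k|^{1/n_k}$; and ``small equal energies'' need not give a point feasible for \eqref{P3}, since \eqref{P1b} with finite $l_k$ forces $D_k>\log_2|\det\mathbf{\Phi}_k|$ (positive for unstable plants), so the algorithm has to presuppose a feasible start.
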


\begin{proof}
	See Appendix \ref{Appendix1}.
\end{proof}

\begin{algorithm}[t]\label{Algo1}
	\caption{The proposed iterative algorithm for solving problem \eqref{P3}}
	\SetKwInOut{Input}{Input}\SetKwInOut{Output}{Output}\SetKwInOut{Initialize}{Initialization}
	\Input {System parameter $P_{\text{umax}}$, $P_{\text{dmax}}$, $F_{\text{max}}$, $T$, etc; the convergence tolerance $\epsilon$.}
	\Initialize {Calculate a feasible $\mathbf{s}^{\text{u}}_0$ and $\mathbf{s}^{\text{d}}_0$, and set $i = 0$}
	\Repeat{$ \frac{L^{i-1}-L^{i}}{L^{i-1}} <\epsilon$}{
		Set $i = i+1$\;
		Update $\mathbf{s}^{\text{u}}_i$ and $\mathbf{s}^{\text{d}}_i$ by solving \eqref{P4}, denote the value of the objective function as $L^i$\; }
	Calculate the WPT power based on \eqref{tEs} and \eqref{tEe}\;
	Calculate the transmit power and computing capability based on \eqref{puk}-\eqref{fk}\;
	\Output  {the optimal resource allocation results, and the sum LQR cost $L^i$.}
\end{algorithm}

\section{Single-Loop Analysis and Practical Considerations}
\label{sec_analysis}
In this section, we further consider the special case of $K = 1$, i.e., the single-loop case, and analyze the properties of the optimal solution, so as to gain insight for the practical applications. Specifically, for the single-loop case, since the LQR cost is a monotonically decreasing function of the CNE as shown in \eqref{CNE_LQR}, minimizing the LQR cost is equivalent to maximizing the CNE. Therefore, the problem can be reduced to the following form
\begin{subequations}\label{P5}
	\begin{align} \max\limits_{\substack{{E}^{\text{c}},{E}^{\text{u}},{t}^{\text{u}}\\{t}^{\text{c}},{E}^{\text{d}},{t}^{\text{d}}}} \quad &D  \\
		\text{s.t.} \quad &D\leq \rho D^\text{s},\label{P5a}\\
		&D\leq \rho t^{\text{u}}  B_\text{max} \log_2 \left(1+\frac{g^{\text{u}} E^{\text{u}}}{ t^{\text{u}} B_\text{max} N_0} \right),\label{P5b}\\
		&D\leq \rho  t^{\text{c}} \left( \frac{E^{\text{c}}}{\alpha^3\kappa t^{\text{c}}}\right) ^{1/3},\label{P5c}\\
		&D\leq t^{\text{d}}  B_\text{max} \log_2 \left(1+\frac{g^{\text{d}} E^{\text{d}}}{ t^{\text{d}} B_\text{max} N_0} \right),\label{P5d}\\
		&t^{\text{u}} + t^{\text{c}} + t^{\text{d}} \leq T,\label{P5e}\\
		&\frac{E^{\text{u}}}{\eta^{\text{s}}}+\frac{E^{\text{c}}+E^{\text{d}}}{\eta^{\text{e}}}  \leq P_0 T,\label{P5f}\\
		&\frac{E^{\text{u}}}{t^{\text{u}}}\leq P_{\text{umax}}, \frac{E^{\text{d}}}{t^{\text{d}}}\leq P_{\text{dmax}}, \label{P5h}\\
		&\left( \frac{E^{\text{c}}}{\kappa t^{\text{c}}}\right) ^{1/3}\leq F_{\max},\label{P5i}
	\end{align}
\end{subequations}
where the index $k$ is omitted. The above problem is a convex problem and can be solved efficiently by standard convex optimization methods (e.g., the interior-point method). However, such a direct numerical solution provides limited insight into the intrinsic structure of the optimal resource allocation. We consider the energy-constrained scenario where the energy is the bottleneck of the system performance. In such cases, the sensor data throughput constraint \eqref{P5a}, the maximum transmit power constraint \eqref{P5h}, and the computing capability constraints \eqref{P5i} can be omitted. Under this regime, the optimal solution can be expressed based on a unified parameter, which is summarized in the following proposition.

\begin{proposition}\label{prop3}
	Consider problem \eqref{P5} in the energy-limited regime, where the constraints \eqref{P5a}, \eqref{P5h} and \eqref{P5i} are inactive at the optimum. Then, the optimal solution can be expressed in a semi-closed form parameterized by a scalar $\xi>0$.
	
	Defining the uplink and downlink signal-to-noise ratios (SNRs) as
	\begin{align}
		\gamma^{\text{u}} \triangleq \frac{g^{\text{u}}E^{\text{u}}}{B_{\max}N_0 t^{\text{u}}},
		\qquad
		\gamma^{\text{d}} \triangleq \frac{g^{\text{d}}E^{\text{d}}}{B_{\max}N_0 t^{\text{d}}},
	\end{align}
	then the following equations must hold
	\begin{align}
		\frac{\eta^{\text{s}} g^{\text{u}}}
		{\left(1+\gamma^{\text{u}}\right)\ln(1+\gamma^{\text{u}})-\gamma^{\text{u}}}
		=\xi,\label{prop3:eq1}
		\\
		\frac{\eta^{\text{e}} g^{\text{d}}}
		{\left(1+\gamma^{\text{d}}\right)\ln(1+\gamma^{\text{d}})-\gamma^{\text{d}}}
		=\xi.\label{prop3:eq2}
	\end{align}
	The left side of \eqref{prop3:eq1} and \eqref{prop3:eq2} are monotonically decreasing functions of $\gamma^{\text{u}}$ and $\gamma^{\text{d}}$, respectively. Therefore, the corresponding SNRs can be uniquely determined by $\xi$. The above mappings can be denoted by
	\begin{align}
		\gamma^{\text{u}} = \gamma^{\text{u}}\left( \xi\right) , \\
		\gamma^{\text{d}} = \gamma^{\text{d}}\left( \xi\right) ,
	\end{align}
	which are decreasing functions of $\xi$.
	
	Then, the optimal time allocation can be expressed by the CNE $D$ and the scalar $\xi>0$, as
	\begin{align}
		t^{\text{u}} &= \frac{D}{\rho B_{\max}\log_2\left( 1+\gamma^{\text{u}}\left( \xi\right)\right) }, \label{T_closed}\\
		t^{\text{d}} &= \frac{D}{B_{\max}\log_2\left( 1+\gamma^{\text{d}}\left( \xi\right)\right) },\label{Td_closed} \\
		t^{\text{c}} &= \frac{\kappa^{1/3}\alpha}{\rho}
		\left(\frac{2\xi}{\eta^{\text{e}}}\right)^{1/3} D,\label{T_closed1}
	\end{align}
	and the optimal energy allocation is given by
	\begin{align}
		E^{\text{u}} &= \frac{\gamma^{\text{u}}N_0}
		{\rho g^{\text{u}}\log_2\left( 1+\gamma^{\text{u}}\left( \xi\right)\right) } D,\label{Eu_closed} \\
		E^{\text{d}} &= \frac{\gamma^{\text{d}}N_0}
		{g^{\text{d}}\log_2\left( 1+\gamma^{\text{d}}\left( \xi\right)\right) } D,\label{Ed_closed} \\
		E^{\text{c}} &= \frac{\kappa^{1/3}\alpha(\eta^{\text{e}})^{2/3}}
		{2\rho}(2\xi)^{-2/3} D.\label{E_closed}
	\end{align}
	
	Moreover, $\xi$ can be obtained by solving the following one-dimensional equation
	\begin{align}\label{prop3:eq}
		P_0A\left( \xi\right) -B\left( \xi\right)=0,
	\end{align}
	where the functions $A\left( \xi\right) $ and $B\left( \xi\right) $ are defined in \eqref{defA} and \eqref{defB}, as shown at the top of the next page.
	After obtaining the optimal $\xi^\star$, the optimal objective value is
	\begin{align}
		D^\star=\frac{T}{A(\xi^\star)}=\frac{TP_0}{B(\xi^\star)},
	\end{align}
	and all optimal primal variables can be obtained directly from the above expressions.
\end{proposition}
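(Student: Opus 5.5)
The plan is to derive the structure directly from the KKT conditions of \eqref{P5} once the constraints \eqref{P5a}, \eqref{P5h} and \eqref{P5i} are dropped. Since \eqref{P5} is convex and Slater's condition clearly holds (take small positive times and energies), the KKT conditions are necessary and sufficient. I introduce multipliers $\lambda^{\text{u}},\lambda^{\text{c}},\lambda^{\text{d}}\ge 0$ for \eqref{P5b}–\eqref{P5d}, $\mu\ge0$ for the time constraint \eqref{P5e}, and $\nu\ge0$ for the energy constraint \eqref{P5f}. Stationarity in $D$ gives $\lambda^{\text{u}}+\lambda^{\text{c}}+\lambda^{\text{d}}=1$.

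The first step is to argue that every constraint is active. Each of $D^{\text{u}},D^{\text{c}},D^{\text{d}}$ is strictly increasing in its own time and energy. If one of \eqref{P5b}–\eqref{P5d} were slack, part of that stage's time or energy could be moved to the other stages and $D$ strictly increased. So all three hold with equality, and all three multipliers are positive. Otherwise stationarity in the corresponding $t$ would force $\mu=0$, and then stationarity in the corresponding $E$ would force $\nu=0$, which contradicts the other stationarity equations. By the same argument, \eqref{P5e} and \eqref{P5f} are tight, so $\mu,\nu>0$.

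Next I write stationarity in each time–energy pair and divide to eliminate the stage multiplier.
\begin{itemize}
\item \emph{Uplink.} Stationarity in $E^{\text{u}}$ and $t^{\text{u}}$ reads
\begin{align*}
\frac{\lambda^{\text{u}}\rho g^{\text{u}}}{N_0(1+\gamma^{\text{u}})\ln 2}&=\frac{\nu}{\eta^{\text{s}}},\\
\frac{\lambda^{\text{u}}\rho B_{\max}\left[(1+\gamma^{\text{u}})\ln(1+\gamma^{\text{u}})-\gamma^{\text{u}}\right]}{(1+\gamma^{\text{u}})\ln 2}&=\mu .
\end{align*}
Dividing the first by the second gives \eqref{prop3:eq1}, with $\xi$ equal to $\mu/\nu$ up to the fixed factor $B_{\max}N_0$.
\item \emph{Downlink.} The identical computation gives \eqref{prop3:eq2}. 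Because $\mu$ and $\nu$ are shared across stages, both equations involve the same $\xi$.
\item \emph{Computing.} Write $D^{\text{c}}=\rho (E^{\text{c}})^{1/3}(t^{\text{c}})^{2/3}/(\alpha\kappa^{1/3})$. Then stationarity gives $\lambda^{\text{c}}D^{\text{c}}/(3E^{\text{c}})=\nu/\eta^{\text{e}}$ and $2\lambda^{\text{c}}D^{\text{c}}/(3t^{\text{c}})=\mu$. Their ratio fixes $E^{\text{c}}/t^{\text{c}}$ (the CPU power $\kappa f^3$) as a multiple of $\eta^{\text{e}}\xi$. Substituting this into the active constraint \eqref{P5c} gives \eqref{T_closed1} and \eqref{E_closed}, after the constants in $\xi$ are normalized consistently.
\end{itemize}

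Monotonicity follows because $h(\gamma)=(1+\gamma)\ln(1+\gamma)-\gamma$ satisfies $h(0)=0$ and $h'(\gamma)=\ln(1+\gamma)>0$. Hence the left-hand sides of \eqref{prop3:eq1} and \eqref{prop3:eq2} decrease strictly from $+\infty$ to $0$, so each has a unique inverse $\gamma(\xi)$, and that inverse is decreasing. With the SNRs fixed, the active constraints \eqref{P5b} and \eqref{P5d} give $t^{\text{u}},t^{\text{d}}$ linear in $D$, namely \eqref{T_closed} and \eqref{Td_closed}. The definition of $\gamma$ then gives $E=\gamma B_{\max}N_0 t/g$, which is \eqref{Eu_closed} and \eqref{Ed_closed}.

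All primal variables are now of the form (function of $\xi$)$\times D$. The tight time constraint \eqref{P5e} becomes $D\,A(\xi)=T$, where $A$ is the sum of the time coefficients. The tight energy constraint \eqref{P5f} becomes $D\,B(\xi)=P_0T$, where $B$ is the $\eta$-weighted sum of the energy coefficients. Eliminating $D$ gives \eqref{prop3:eq}, and either relation recovers $D^\star$.

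The main obstacle is showing that \eqref{prop3:eq} pins down a unique admissible $\xi^\star$. The time coefficients grow with $\xi$ (lower SNR, more time), while the energy coefficients shrink with $\xi$. So $P_0A(\xi)-B(\xi)$ should be increasing, and I would verify its limits at $0$ and $\infty$ to get a sign change. Even without strict monotonicity, uniqueness of the KKT point of the convex problem (strict concavity of $D^{\text{c}}$ and of the perspective rate functions along the relevant directions) would guarantee that the root is unique.
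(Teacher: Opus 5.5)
You follow the paper's route. That route is: form the Lagrangian of \eqref{P5}, divide each stage's energy- and time-stationarity conditions to eliminate the stage multiplier, note that the surviving ratio is common to all stages, use tightness to write every variable as a function of $\xi$ times $D$, and eliminate $D$ between the time and energy constraints. Your extra rigor (Slater, $\lambda^{\text{u}}+\lambda^{\text{c}}+\lambda^{\text{d}}=1$ forcing positive multipliers, $h'(\gamma)=\ln(1+\gamma)$) is fine.

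The concrete error is the direction of the ratio. Dividing your uplink $E$-condition by your uplink $t$-condition gives
\[
\frac{\eta^{\text{s}}g^{\text{u}}}{(1+\gamma^{\text{u}})\ln(1+\gamma^{\text{u}})-\gamma^{\text{u}}}=B_{\max}N_0\,\frac{\nu}{\mu}.
\]
So $\xi$ is proportional to the energy price over the time price, $\nu/\mu$, not $\mu/\nu$, and no fixed factor converts one into the other. Your computing conditions give $t^{\text{c}}/(2E^{\text{c}})=\nu/(\eta^{\text{e}}\mu)$. Hence $\kappa f^3=E^{\text{c}}/t^{\text{c}}$ is proportional to $\eta^{\text{e}}/\xi$, not to $\eta^{\text{e}}\xi$. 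With your version, the active constraint \eqref{P5c}, $D=\rho\alpha^{-1}\kappa^{-1/3}(E^{\text{c}}/t^{\text{c}})^{1/3}t^{\text{c}}$, yields $t^{\text{c}}\propto\xi^{-1/3}D$ and $E^{\text{c}}\propto\xi^{2/3}D$. These have the opposite exponents to \eqref{T_closed1} and \eqref{E_closed}. They also contradict your own later claim that every time coefficient increases with $\xi$, which the monotonicity of $P_0A-B$ relies on.

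Once the direction is fixed, the ``normalization'' you appeal to is not cosmetic, and you should carry it out explicitly.
\begin{itemize}
\item The exact computing condition is $\eta^{\text{e}}t^{\text{c}}/(2E^{\text{c}})=\nu/\mu$, while the two communication conditions equal $B_{\max}N_0\,\nu/\mu$. So the $\xi$ of \eqref{prop3:eq1}--\eqref{prop3:eq2} as printed (dimensionless) and the $\xi$ of \eqref{T_closed1} (units of $1/\mathrm{W}$) differ by the factor $B_{\max}N_0$. One of them must be rescaled, e.g.\ by putting $B_{\max}N_0$ in the denominators of \eqref{prop3:eq1}--\eqref{prop3:eq2}.
\item Combining $E^{\text{c}}=\frac{\eta^{\text{e}}}{2\xi}t^{\text{c}}$ with \eqref{T_closed1} gives $E^{\text{c}}=\frac{\kappa^{1/3}\alpha(\eta^{\text{e}})^{2/3}}{\rho}(2\xi)^{-2/3}D$. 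This is twice \eqref{E_closed}, so the third term of $B(\xi)$ in \eqref{defB} also doubles.
\end{itemize}
The paper's appendix takes exactly these two silent steps: $B_{\max}N_0$ disappears between \eqref{KKT1} and \eqref{Appendix2_equation}, and a spurious $1/2$ enters $E^{\text{c}}$. A correct proof therefore establishes the corrected formulas rather than the printed ones.

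Finally, drop your fallback uniqueness argument. $D^{\text{c}}$ and the perspective rate functions are positively homogeneous of degree one, hence linear along rays and not strictly concave. The monotonicity route does work: $A\to0$ and $B\to\infty$ as $\xi\to0$, while $A\to\infty$ and $B$ stays bounded as $\xi\to\infty$.
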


\begin{proof}
	See Appendix \ref{Appendix2}.
\end{proof}

\begin{figure*}[!t]
	\begin{align}
		A(\xi)
		&\triangleq
		\frac{1}{\rho B_{\max}\log_2\left( 1+\gamma^{\text{u}}\left( \xi\right) \right) }
		+\frac{1}{B_{\max}\log_2\left( 1+\gamma^{\text{d}}\left( \xi\right) \right) }
		+\frac{\kappa^{1/3}\alpha}{\rho}
		\left(\frac{2\xi}{\eta^{\text{e}}}\right)^{1/3},\label{defA}\\
		B(\xi)
		&\triangleq
		\frac{\gamma^{\text{u}}\left( \xi\right) N_0}
		{\eta^{\text{s}}\rho g^{\text{u}}\log_2\left( 1+\gamma^{\text{u}}\left( \xi\right) \right) }
		+\frac{\gamma^{\text{d}}\left( \xi\right) N_0}
		{\eta^{\text{e}} g^{\text{d}}\log_2\left( 1+\gamma^{\text{d}}\left( \xi\right) \right) }
		+\frac{\kappa^{1/3}\alpha(\eta^{\text{e}})^{-1/3}}
		{2\rho}(2\xi)^{-2/3}.\label{defB}
	\end{align}
	\hrulefill
\end{figure*}

\begin{remark}
	Since $\xi^\star$ is independent of the control cycle time $T$, varying $T$ only scales the optimal throughput and the absolute resource consumption, while leaving the optimal allocation ratios unchanged.
\end{remark}

\begin{remark}
	By eliminating $\xi$ from \textbf{Proposition \ref{prop3}}, we obtain
	$
	(1+\gamma^{\mathrm{u}})\ln(1+\gamma^{\mathrm{u} })-\gamma^{\mathrm{u} }
	=
	\frac{2\eta^{\mathrm{s}}g^{\mathrm{u}}\kappa}{\eta^{\mathrm{e}}}f^3,
	$
	and
	$
	(1+\gamma^{\mathrm{d} })\ln(1+\gamma^{\mathrm{d} })-\gamma^{\mathrm{d} }
	=
	2g^{\mathrm{d}}\kappa f^3.
	$
	Therefore, both optimal uplink and downlink SNRs increase monotonically with the optimal CPU frequency. This result shows how the channel gains, WPT efficiencies, and computing energy coefficient jointly determine relationship between the optimal communication quality and computing capability.
\end{remark}

\textbf{Proposition \ref{prop3}} shows that, instead of solving the original multi-variable convex problem directly, one only needs to solve a scalar nonlinear equation to obtain the optimal solution to \eqref{P5}. This significantly simplifies both the computation and the interpretation of the optimal solution. Specifically, the following lemma shows that the left side of \eqref{prop3:eq} is increasing with respect to $\xi$. 

\begin{lemma}\label{lemma1}
	The function $A\left( \xi\right)$ is increasing with respect to $\xi$, and the function  $B\left( \xi\right)$ is decreasing with respect to $\xi$. Therefore, the left side of  \eqref{prop3:eq} is increasing with respect to $\xi$.
\end{lemma}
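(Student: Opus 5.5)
Our plan is to study $A$ and $B$ term by term, using the monotonicity facts already stated in Proposition~\ref{prop3}: $\gamma^{\text{u}}(\xi)$ and $\gamma^{\text{d}}(\xi)$ are decreasing in $\xi$. This holds because the maps $\gamma\mapsto (1+\gamma)\ln(1+\gamma)-\gamma$ are strictly increasing on $\gamma>0$, with derivative $\ln(1+\gamma)>0$. Hence the left sides of \eqref{prop3:eq1}--\eqref{prop3:eq2} are strictly decreasing in $\gamma$, and the inverse maps $\xi\mapsto\gamma^{\text{u/d}}(\xi)$ are strictly decreasing.

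\emph{Monotonicity of $A$.} The first two terms of \eqref{defA} have the form $c/\log_2(1+\gamma(\xi))$ with $c>0$. Since $\gamma(\xi)$ decreases and $\log_2(1+\cdot)$ is increasing and positive, the denominator decreases, so each term increases in $\xi$. The third term is a positive multiple of $\xi^{1/3}$, which is increasing. Hence $A$ is increasing.

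\emph{Monotonicity of $B$.} The third term of \eqref{defB} is a positive multiple of $\xi^{-2/3}$, which is decreasing. The first two terms have the form $c\,h(\gamma(\xi))$, where $c>0$ and
\begin{align*}
h(\gamma)\triangleq \frac{\gamma}{\log_2(1+\gamma)}=\frac{\gamma\ln 2}{\ln(1+\gamma)}.
\end{align*}
Because $\gamma(\xi)$ is decreasing, it suffices to show that $h$ is increasing on $\gamma>0$. Differentiating gives $h'(\gamma)\propto \ln(1+\gamma)-\frac{\gamma}{1+\gamma}$. Let $\phi(\gamma)=\ln(1+\gamma)-\frac{\gamma}{1+\gamma}$. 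Then $\phi(0)=0$ and $\phi'(\gamma)=\frac{\gamma}{(1+\gamma)^2}>0$, so $h'>0$. Each communication term of $B$ is therefore the composition of an increasing function with a decreasing one, hence decreasing. So $B$ is decreasing.

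\emph{Conclusion.} Since $P_0>0$, $P_0A(\xi)-B(\xi)$ is the sum of an increasing function and the negative of a decreasing function, so it is increasing. The only step that needs care is the sign of $h'$. The elementary inequality $\ln(1+\gamma)>\gamma/(1+\gamma)$ handles it, and we expect no real obstacle beyond the bookkeeping above.
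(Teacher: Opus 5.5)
Your proposal is correct and follows the same route as the paper. You argue term by term using the decreasing maps $\gamma^{\text{u}}(\xi),\gamma^{\text{d}}(\xi)$, and the key step is that $\gamma/\log_2(1+\gamma)$ is increasing because $\ln(1+\gamma)>\gamma/(1+\gamma)$; you also supply the short derivative checks (for $(1+\gamma)\ln(1+\gamma)-\gamma$ and for $\phi$) that the paper leaves as ``not difficult to show.''
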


\begin{proof}
	According to \textbf{Proposition \ref{prop3}},  the left-hand sides of \eqref{prop3:eq1} and \eqref{prop3:eq2} are monotonically decreasing functions of $\gamma^{\text{u}}$ and $\gamma^{\text{d}}$, respectively. Therefore, $\gamma^{\text{u}}\left( \xi\right) $ and $\gamma^{\text{d}}\left( \xi\right) $ are both decreasing functions of $\xi$.
	
	The first two terms of $A\left( \xi\right) $ are decreasing with respect to $\gamma^{\text{u}}\left( \xi\right) $ and $\gamma^{\text{d}}\left( \xi\right) $, respectively, indicating that they are increasing with respect to $\xi$. As the third term of $A\left( \xi\right) $	is increasing with respect to $\xi$, we have $A\left( \xi\right) $ is increasing with $\xi$.
	
	To analyze the monotonicity of $B\left( \xi\right) $, we define the function
	\begin{align}
		\psi(x)\triangleq \frac{x}{\log_2(1+x)}, \qquad x>0.
	\end{align}
	Its derivative can be calculated as 
	\begin{align}
		\psi'(x)
		=
		\frac{\log_2(1+x)-\dfrac{x}{(1+x)\ln 2}}
		{\left[\log_2(1+x)\right]^2}
		>0.
	\end{align}
	It is not difficult to show that
	\begin{align}
		\ln(1+x)-\frac{x}{1+x}>0, \qquad x>0.
	\end{align}
	Therefore, $\psi(x)$ is increasing in $x$. Based on the above analysis, the first two terms of $B\left( \xi\right) $ are increasing with respect to $\gamma^{\text{u}}\left( \xi\right) $ and $\gamma^{\text{d}}\left( \xi\right) $, respectively. Since both $\gamma^{\text{u}}\left( \xi\right) $ and $\gamma^{\text{d}}\left( \xi\right) $ are decreasing with $\xi$, we have the first two terms of $B\left( \xi\right) $ are decreasing with $\xi$ . As the third term of $B\left( \xi\right) $ is also decreasing with respect to $\xi$, we have $B(\xi)$ is decreasing with respect to $\xi$.
	
	Based on the above analysis, the function $P_0A(\xi)-B(\xi)$ is increasing with respect to $\xi$. Therefore, binary search can be applied to solve equation \eqref{prop3:eq} efficiently.
\end{proof}

Based on \textbf{Lemma \ref{lemma1}}, \eqref{prop3:eq} can be solved efficiently through binary search. For the proposed algorithm in \textbf{Proposition \ref{prop3}}, the main computational complexity lies in the one-dimensional search over $\xi$. In each iteration, the corresponding $\gamma^{\mathrm{u}}$ and $\gamma^{\mathrm{d}}$ are obtained from two scalar monotone equations \eqref{prop3:eq1} and \eqref{prop3:eq2}, which can also be solved efficiently by one-dimensional search. Therefore, given an accuracy $\varepsilon$, the computational complexity is $O\!\left(\ln^2 \frac{1}{\varepsilon}\right)$. After obtaining $\xi^\star$, the optimal time and energy allocation solutions can be directly recovered from the semi-closed-form expressions \eqref{T_closed}-\eqref{E_closed}. Hence, the proposed method has a low computational complexity as $O\!\left(\ln^2 \frac{1}{\varepsilon}\right)$.

Further, we derive the properties of the optimal resource allocation in the low-SNR regime, as summarized in the following proposition.

\begin{proposition}\label{prop4}
When the WPT transmit power $P_0$ is very small such that the uplink and downlink lie in the low-SNR regime, the optimal energy allocation approximately satisfies
\begin{align}
E^{\text{u}} :E^{\text{d}} \approx \frac{1}{\rho g^{\text{u}} }:\frac{1}{g^{\text{d}} }.\label{E_allo}
\end{align}
In addition, the time allocation satisfies
\begin{align}
	t^{\text{u}} :t^{\text{d}} \approx \frac{1}{\rho \sqrt{\eta^{\text{s}}g^{\text{u}}} }:\frac{1}{\sqrt{\eta^{\text{e}}g^{\text{d}} }}.\label{t_allo}
\end{align}
\end{proposition}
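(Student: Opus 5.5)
The plan is to start from the semi-closed-form characterization in \textbf{Proposition \ref{prop3}} and take the small-SNR asymptotics of the optimality conditions \eqref{prop3:eq1}--\eqref{prop3:eq2}. All quantities in \eqref{T_closed}--\eqref{Ed_closed} are then expanded to leading order in $\gamma^{\text{u}}$ and $\gamma^{\text{d}}$.

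First I will justify that a small $P_0$ forces both SNRs to be small. Rewrite \eqref{prop3:eq} as $P_0 = B(\xi)/A(\xi)$. By \textbf{Lemma \ref{lemma1}}, $A(\xi)$ is positive and increasing and $B(\xi)$ is positive and decreasing, so $B/A$ is strictly decreasing in $\xi$.

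Next I will examine the limit $\xi\to\infty$. The left-hand sides of \eqref{prop3:eq1}--\eqref{prop3:eq2} are decreasing in the SNR, so $\xi\to\infty$ forces $\gamma^{\text{u}}(\xi),\gamma^{\text{d}}(\xi)\to 0$. In this limit the third term of $A$ diverges like $\xi^{1/3}$. The first two terms of $B$ tend to the finite constants $N_0\ln 2/(\eta^{\text{s}}\rho g^{\text{u}})$ and $N_0\ln 2/(\eta^{\text{e}} g^{\text{d}})$, and the third term of $B$ vanishes. Hence $B/A\to 0$.

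Combining these, as $P_0\to 0$ the root satisfies $\xi^\star\to\infty$, and therefore $\gamma^{\text{u}},\gamma^{\text{d}}\to 0$. This makes the low-SNR regime precise.

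In this regime I will use the Taylor expansion
\begin{align}
(1+\gamma)\ln(1+\gamma)-\gamma=\tfrac{\gamma^2}{2}+O(\gamma^3).
\end{align}
Substituting it into \eqref{prop3:eq1} and \eqref{prop3:eq2} gives
\begin{align}
\gamma^{\text{u}}\approx\sqrt{2\eta^{\text{s}}g^{\text{u}}/\xi},\qquad \gamma^{\text{d}}\approx\sqrt{2\eta^{\text{e}}g^{\text{d}}/\xi}.
\end{align}

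I will also use $\log_2(1+\gamma)\approx\gamma/\ln 2$.

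For the energies, inserting this into \eqref{Eu_closed} and \eqref{Ed_closed} cancels the factor $\gamma$. This leaves
\begin{align}
E^{\text{u}}\approx\frac{N_0\ln 2}{\rho g^{\text{u}}}D,\qquad E^{\text{d}}\approx \frac{N_0\ln 2}{g^{\text{d}}}D,
\end{align}
and taking the ratio gives \eqref{E_allo}.

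For the times, \eqref{T_closed} and \eqref{Td_closed} become
\begin{align}
t^{\text{u}}\approx\frac{D\ln 2}{\rho B_{\max}\gamma^{\text{u}}},\qquad t^{\text{d}}\approx\frac{D\ln 2}{B_{\max}\gamma^{\text{d}}}.
\end{align}
Substituting the square-root expressions for the SNRs, the common factor $D\ln 2\,\sqrt{\xi/2}/B_{\max}$ cancels in the ratio. The result is \eqref{t_allo}.

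The step I regard as most delicate is the asymptotic regime argument, that is, showing rigorously that $\xi^\star\to\infty$ as $P_0\to 0$. The relative errors in both ratios are only $O(\gamma)$, which is why the statement is phrased as approximate. The expansions themselves are routine.
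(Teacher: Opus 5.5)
Your proposal is correct and follows the paper's proof. Like the paper, you Taylor-expand \eqref{prop3:eq1}--\eqref{prop3:eq2} to obtain $\gamma^{\text{u}}/\gamma^{\text{d}}\approx\sqrt{\eta^{\text{s}}g^{\text{u}}/(\eta^{\text{e}}g^{\text{d}})}$, substitute $\log_2(1+\gamma)\approx\gamma/\ln 2$ into \eqref{T_closed}--\eqref{Ed_closed}, and take ratios. Your extra argument is sound and goes beyond the paper, which simply asserts the low-SNR regime: $P_0=B(\xi)/A(\xi)$ is strictly decreasing and tends to $0$ as $\xi\to\infty$, so $\xi^\star\to\infty$ and both SNRs vanish.
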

\begin{proof}
	When the energy is extremely limited, both uplink and downlink operate in the low-SNR regime. Using the Taylor expansion of function $\ln \left( 1+x\right)$, the equations \eqref{prop3:eq1} and  \eqref{prop3:eq2} can be approximated as
	\begin{align}
		\frac{2\eta^{\text{s}} g^{\text{u}}}{\left( \gamma^{\text{u}}\right) ^2} 		=\xi,\label{prop4:eq1}	\\
		\frac{2\eta^{\text{e}} g^{\text{d}}}
		{\left( \gamma^{\text{d}}\right) ^2}
		=\xi,\label{prop4:eq2}
	\end{align}
	which indicates that
	\begin{align}
		\frac{\gamma^{\text{u}}}{\gamma^{\text{d}}} =	\sqrt{\frac{\eta^{\text{s}} g^{\text{u}}}
		{\eta^{\text{e}} g^{\text{d}}}}.\label{prop4:eq3}
	\end{align}
	
	Similarly, \eqref{T_closed} and \eqref{Td_closed} can be obtained as
	\begin{align}
	t^{\text{u}} &= \frac{\ln \left( 2 \right) D}{\rho B_{\max} \gamma^{\text{u}}\left( \xi\right)}, \label{prop4:eq4}\\
	t^{\text{d}} &= \frac{\ln \left( 2 \right)D}{B_{\max}\gamma^{\text{d}}\left( \xi\right)}.\label{prop4:eq5} 
\end{align}
	
		Based on \eqref{prop4:eq3}, \eqref{prop4:eq4}, and \eqref{prop4:eq5}, the equation in \eqref{t_allo} can be derived.
		
		In addition, the energy in \eqref{Eu_closed} and \eqref{Ed_closed} can be approximated
		\begin{align}
			E^{\text{u}} &= \frac{\ln \left( 2 \right) N_0}
			{\rho g^{\text{u}} } D, \\
			E^{\text{d}} &= \frac{\ln \left( 2 \right) N_0}
			{g^{\text{d}}} D,
		\end{align}
		which indicates \eqref{E_allo}.
\end{proof}

The proposed framework offers valuable insights into the joint design of WPT, communication, and computing in $\textbf{SC}^3$ closed loops. Nevertheless, several practical challenges remain for real-world deployment. First, the proposed optimization relies on accurate channel state information (CSI). Rapid channel variations due to node mobility would incur significant estimation overhead, potentially degrading real-time performance. Consequently, the results herein serve as performance benchmarks under ideal CSI assumptions, while robust designs accounting for imperfect or statistical CSI constitute an important avenue for future investigation.

Furthermore, space-based WPT systems currently remain at the prototype validation and experimental demonstration stage, with large-scale practical deployment yet to be realized. The construction of extensive ground energy supply infrastructure entails substantial costs, and significant technological advances are prerequisite for practical viability. Nevertheless, space-based WPT may be justified for mission-critical applications, such as emergency rescue, disaster recovery, and remote operations, where infrastructure-independent energy provisioning and operational continuity outweigh cost considerations. Accordingly, the proposed framework should be regarded as a preliminary technological investigation and a theoretically tractable benchmark for future WPT-enabled mission-critical systems.

When applied to large-scale networks, the proposed framework may encounter several practical challenges. First, as the number of devices and $\textbf{SC}^3$ closed loops grows, solving the joint optimization problem in a fully centralized manner becomes computationally prohibitive. Second, multiple $\textbf{SC}^3$ closed loops may need to cooperate to accomplish common tasks, yet modeling the coupling among such cooperative loops remains an open problem. Consequently, developing distributed resource allocation architectures that decompose global optimization into tractable subproblems represents a critical direction for large-scale deployment. Furthermore, large language models (LLMs), which have demonstrated promise in complex multi-agent and heterogeneous systems~\cite{llm1,llm2}, may offer valuable capabilities for coordination and decision-making in cooperative $\textbf{SC}^3$ environments.

\section{Simulation Results and Discussion}
\label{sec_simulation}
In this section, we provide simulation results to evaluate the proposed scheme. Unless specified otherwise, simulation parameters are set to: $K = 5$, $D^{\text{s}}_k = 100$ Mb, $P_{\text{umax}} = 1$ W, $P_{\text{dmax}} = 10$ W, $B_{\text{max}} = 0.1$ MHz, $N_0=-174$ dBm/Hz. The channel gains for sensor-to-EIH and EIH-to-robot links are modeled as $g = g_0 d^{-2}$~\cite{channel}, where $g_0=1.42\times 10^{-4}$ denotes the channel gain at a reference distance of 1 m, and $d$ denotes the distance between the transmitter and receiver, and the link distances are both set to $2$ kilometers in simulations. For the computing-related parameters, we set $\alpha = 1550.7$ cycles/bit as the 95th percentile of random $\alpha$ in the multimedia applications~\cite{com_para1,com_para2}, and the other parameters are set to $\rho = 0.01$, $\kappa = 1\times 10^{-28}$~\cite{MEC_simu} and $F_{\text{max}} = 10$ GHz.

For control parameters, unless specified otherwise, the state matrices $\mathbf{\Phi}_{k}$ are assumed to be $n_k \times n_k$ diagonal matrices with diagonal elements randomly selected in $\left[ -10, 10\right] $, and $n_k$ is randomly selected from $\left[ 20, 50 \right] $. The control system noises are assumed to be independent Gaussian random variables with zero means and covariance matrices $\mathbf{V}_k = \sigma^2_{\text{V},k}\times\mathbf{I}_{n_k}$, and $\sigma^2_{\text{V},k} =\sigma^2_{\text{V}} =  0.01$ unless specified otherwise. The LQR weight matrices are $\mathbf{Q}_k =  \mathbf{I}_{n_k}, \mathbf{R}_k = \mathbf{0}$.

{For the WPT process, we assume that low Earth orbit (LEO) satellites are used for power transmission\footnote{{Although research on space-based solar power satellites often considers geostationary Earth orbit (GEO) satellites, the long propagation distance from GEO to the ground poses challenges such as limited energy collection efficiency and large transmit/receive aperture requirements. To address this issue, existing studies have proposed to deploy a constellation of transmitarrays in lower orbits between the GEO active array and the ground rectenna array, serving as relays to capture and refocus the energy from the GEO active array\cite{WPT2}. However, utilizing LEO satellites for wireless power transmission also introduces additional issues such as beam alignment and tracking, which necessitate further research.}}, at an altitude of $300$ km. The transmit aperture diameter is set to $200$ m, and the frequency of the WPT beam is $10$ GHz. The receive aperture diameters of the sensors and the EIH are set to $0.1$ m and $0.5$ m, respectively. Based on \eqref{eta_bs} and \eqref{eta_be}, the beam collection efficiencies can be calculated as $\eta^{\text{s}}_{\text{b},k} \approx 3.05\times 10^{-6}$ and $\eta^{\text{e}}_{\text{b}} \approx 7.62\times 10^{-5}$. According to ITU-R P.676\cite{ITU}, the zenith atmospheric attenuation at $10$ GHz is approximately $0.05$ dB, corresponding to an atmospheric transmission efficiency of $98.86\%$. Therefore, we set $\eta^{\text{s}}_{\text{atm},k}=\eta^{\text{e}}_{\text{atm}}\approx 98.86\%$. The RF-to-DC conversion efficiencies are set to $\eta^{\text{s}}_{\text{rf-dc},k}=\eta^{\text{e}}_{\text{rf-dc}}=40.52\%$~\cite{WPT5}. In addition, we assume ideal beam pointing and polarization matching, and set $\eta^{\text{s}}_{\text{add},k}=\eta^{\text{e}}_{\text{add}} = 1$. Based on the above settings, the overall energy transfer efficiencies can be calculated as $\eta_k^{\text{s}}=1.22\times 10^{-6}$ and $\eta^{\text{e}}=3.05\times 10^{-5}$. In addition, the maximum WPT power is set to $P_0 = 500$ kW~\cite{WPT_power}, corresponding to maximum received DC powers of approximately $0.61$ W at each sensor and $15.25$ W at the EIH.}

We compare the proposed scheme with two benchmark schemes, which are introduced as follows.
\begin{itemize}
	\item Minimal CNE maximization: Maximizing the minimal CNE among all the  $\textbf{SC}^3$ loops.
	\item Control-oriented scheme with fixed WPT power allocation: Minimizing the sum LQR cost of all the  $\textbf{SC}^3$ loops with equal WPT power allocation.
\end{itemize}

It is worth noting that although SWIPT is commonly studied in wireless power transfer systems, typical SWIPT schemes assume simultaneous information and power transfer over the same link, which is distinct from the scenario considered herein. Consequently, SWIPT-based approaches are not included as primary benchmarks. Additionally, we evaluated a sum-CNE-maximization scheme as a communication-oriented baseline. However, in multi-loop settings, this scheme tends to concentrate excessive resources on a subset of loops, leaving others with insufficient resources to maintain closed-loop stability. This results in unbounded LQR costs for deprived loops. Accordingly, this scheme is excluded from the primary benchmarks presented in the following figures.

\begin{figure} [t]
	\centering
	\includegraphics[width=\linewidth]{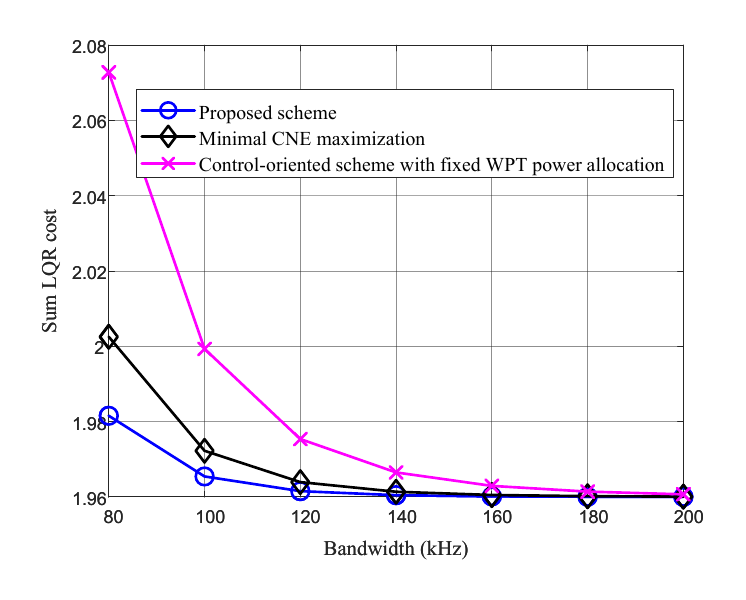}
	\caption{The LQR cost achieved by different schemes, varying with the bandwidth.}
	\label{fig:simu_CNE_vs_B}
\end{figure} 

As expected, the LQR cost decreases with the bandwidth, since the communication components of the $\textbf{SC}^3$ loop can deliver more information when more spectrum is available. Moreover, the proposed scheme attains the lowest LQR cost among the three schemes, showing its superiority. The reason is that the proposed scheme jointly considers all $\textbf{SC}^3$ components together with the energy-transfer process, and directly aims at the overall control performance.

\begin{figure} [t]
	\centering
	\includegraphics[width=\linewidth]{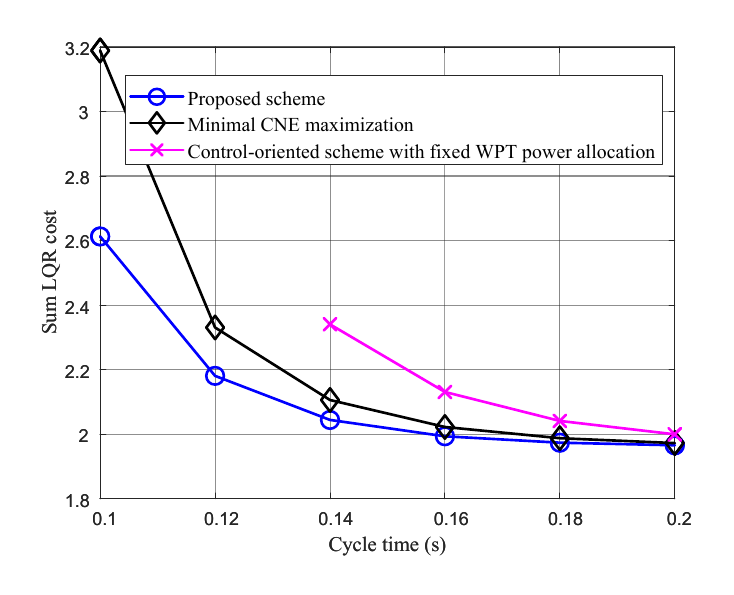}
	\caption{The LQR cost achieved by different schemes, varying with the cycle time.}
	\label{fig:simu_CNE_vs_T}
\end{figure} 

Figure~\ref{fig:simu_CNE_vs_T} shows the LQR cost obtained by different schemes versus the control-cycle time $T$. Again, the proposed scheme consistently achieves the lowest LQR cost, demonstrating its superiority. In particular, when the cycle time is below 0.14 s, some control objects will be unstable with the control-oriented scheme with fixed WPT power allocation, leading to infinite LQR cost.

\begin{figure} [t]
	\centering
	\includegraphics[width=\linewidth]{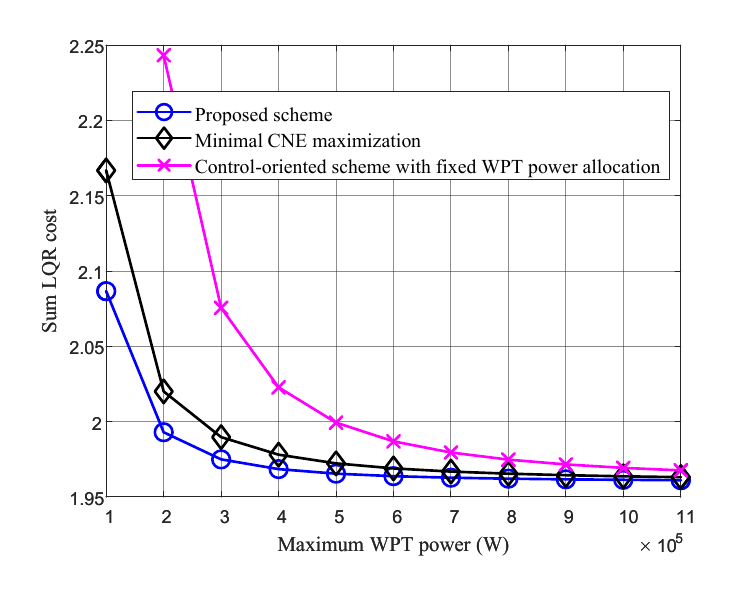}
	\caption{The LQR cost achieved by different schemes, varying with the maximum WPT power.}
	\label{fig:simu_CNE_vs_P}
\end{figure} 

In Fig. \ref{fig:simu_CNE_vs_P}, we show the influence of maximum WPT power (i.e., $P_{0}$) on the sum LQR cost. It can be seen that the LQR costs achieved by the three schemes become similar when the maximum WPT power is high, i.e., $P_{0}$ is larger than 0.8 MW. This is because, with abundant energy, the CNE becomes much larger than $\log_2 |\det \mathbf{\Phi}_k|$. As a result, according to \eqref{CNE_LQR}, the LQR cost approaches its minimum value under ideal communication, namely, $\text{tr}\left( \mathbf{V}_k\mathbf{S}_k\right)$.
\begin{figure} [t]
	\centering
	\includegraphics[width=\linewidth]{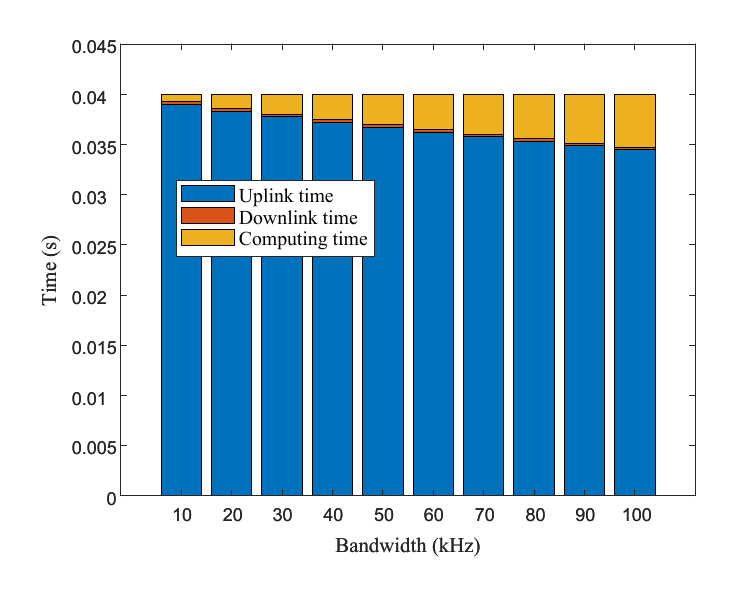}
	\caption{The time allocation results for the uplink/downlink transmissions and computing process, varying with the bandwidth.}
	\label{fig:simu_T_vs_B}
\end{figure}  

Next, we show numerical results for the single-loop case, where the control cycle time is set as $T = 0.04$ s. In Fig. \ref{fig:simu_T_vs_B}, we present the time allocation results for the uplink/downlink transmissions and computing process in the $\textbf{SC}^3$ closed loop with different bandwidth. It can be observed that the uplink transmission time is much larger than the downlink transmission time. This is because the uplink transmits raw sensing data, which typically contains substantial redundancy and therefore requires a longer transmission time. In addition, both the uplink and downlink transmission times decrease as the bandwidth increases. The reason is that a larger bandwidth allows more information to be transmitted per unit time, whereas the volume of data processed per unit time in the computation stage remains unchanged. As a result, the time proportion utilized by communication gradually decreases. This trend is also consistent with the expressions \eqref{T_closed}-\eqref{T_closed1} given in \textbf{Proposition \ref{prop3}}.

\begin{figure} [t]
	\centering
	\includegraphics[width=\linewidth]{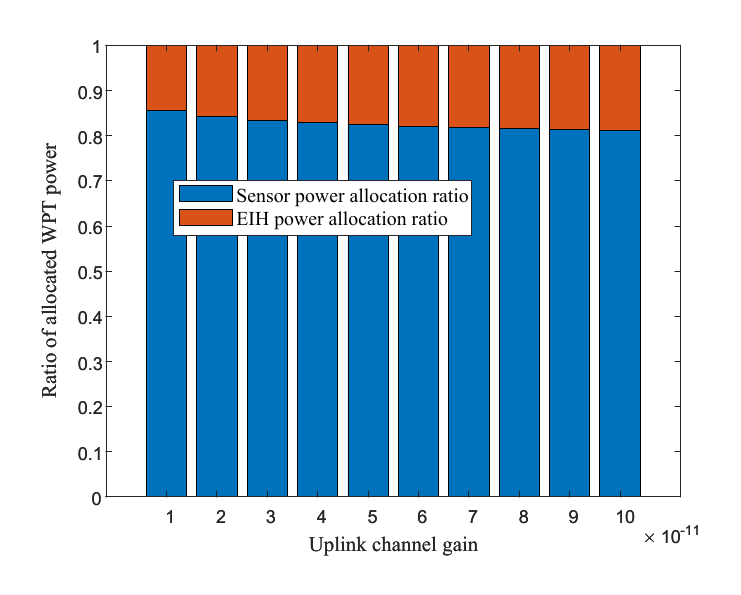}
	\caption{Power allocation results for the WPT process, varying with the uplink channel gain.}
	\label{fig:simu_PE_vs_gu}
\end{figure} 

In Fig. \ref{fig:simu_PE_vs_gu}, we show the WPT power allocation results with different uplink channel gains. It can be seen that the sensor WPT power decreases with the uplink channel gain. This reflects a fairness principle in the closed-loop design: the weaker link should receive more resources. Since the CNE is maximized when the uplink and downlink throughputs are balanced, a sensor with a poor uplink channel needs more harvested energy (i.e., higher WPT power) to raise its uplink rate; when the uplink channel becomes better, this extra WPT power can be shifted to the EIH to keep the balance.

\begin{figure} [t]
	\centering
	\includegraphics[width=\linewidth]{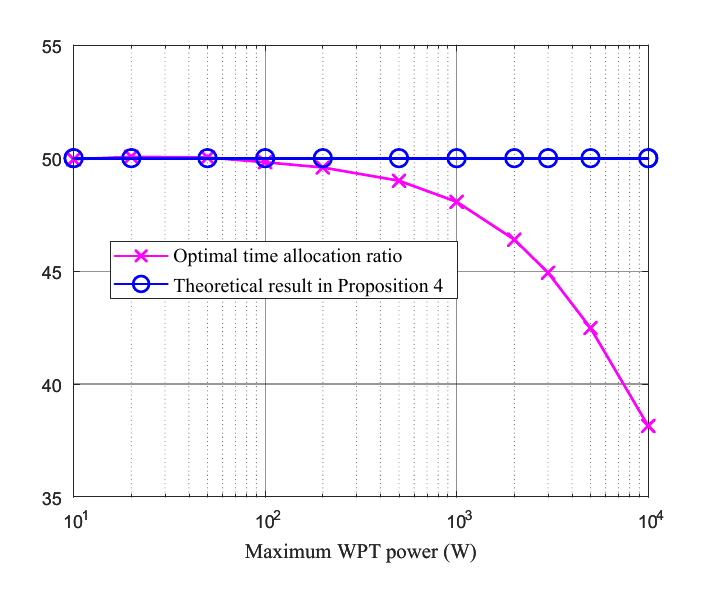}
	\caption{The ratio of allocated uplink transmission time to downlink transmission time, varying with the maximum WPT power.}
	\label{fig:simu_Tratio_vs_P}
\end{figure} 

\begin{figure} [t]
	\centering
	\includegraphics[width=\linewidth]{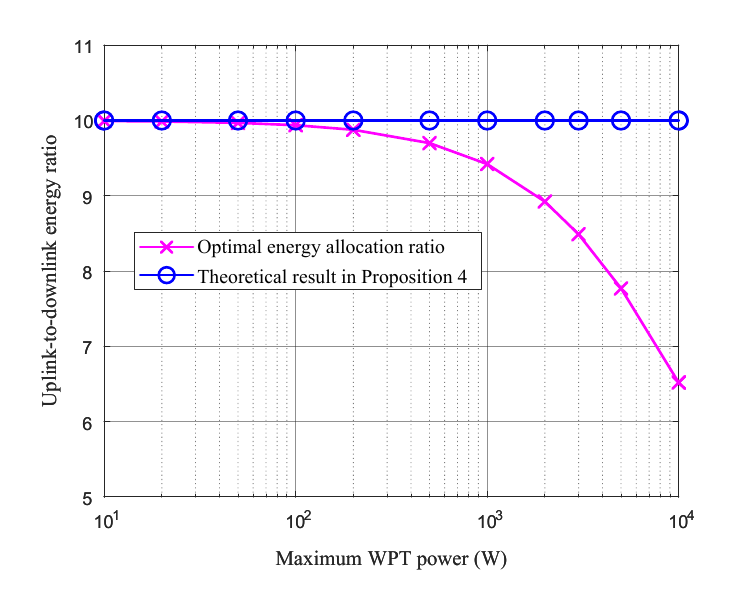}
	\caption{The ratio of allocated energy for the uplink transmission to downlink transmission, varying with the maximum WPT power.}
	\label{fig:simu_Eratio_vs_P}
\end{figure} 

Next, we present simulation results in the energy-limited regime to validate \textbf{Proposition \ref{prop4}}. Fig. \ref{fig:simu_Tratio_vs_P} and Fig. \ref{fig:simu_Eratio_vs_P} depict the uplink-to-downlink time and energy allocation ratio versus the maximum WPT power, where we set $g^{\text{u}} = g^{\text{d}} = 1\times 10^{-15}$ and $\rho = 0.1$. In the low-WPT-power region, the optimal ratios obtained from numerical optimization closely match the theoretical predictions given by \eqref{t_allo} and \eqref{E_allo} in \textbf{Proposition \ref{prop4}}, thereby validating the derived low-SNR approximations. As WPT power increases, the discrepancy between simulated and theoretical results widens, reflecting the breakdown of the low-SNR assumptions underlying \textbf{Proposition \ref{prop4}}. Although the low-SNR condition does not hold across the entire simulation range, \textbf{Proposition \ref{prop4}} nevertheless provides valuable design insights for severely power-constrained scenarios.

\section{Conclusions}
\label{sec_conclusion}
In this paper, we investigated a wireless-powered $\textbf{SC}^3$ system from a holistic system perspective, wherein a satellite delivers RF energy to ground sensors and a UAV-mounted EIH to support closed-loop robotic control. Rather than optimizing communication, computing, and energy transfer in isolation, we explicitly accounted for their intricate coupling and mutual dependencies across the entire $\textbf{SC}^3$
loop. This system-level integration enables us to formulate a unified LQR cost minimization problem that jointly orchestrates transmit powers, bandwidth allocation, computing capability, time allocation for communication and computing, as well as WPT power allocation. An efficient iterative algorithm based on sequential convex approximation was developed to solve this inherently coupled, non-convex optimization problem. For the single-loop special case, we analytically characterized the optimal solution structure and its asymptotic behavior in the energy-limited regime, revealing fundamental insights into how system resources should be balanced under stringent energy constraints. Simulation results demonstrated that the proposed holistic design achieves substantially lower LQR cost compared to benchmark schemes that treat system components separately.

While this paper establishes a comprehensive system-centric framework for modeling, analysis, and optimization in wireless-powered $\textbf{SC}^3$ systems, experimental validation through proof-of-concept implementation remains essential for assessing practical feasibility, which is an important direction for future work.

\appendices
\section{Proof of Proposition \ref{prop2}}\label{Appendix1}
Denoting the optimal solution to problem \eqref{P4} in the $i$-th iteration as $\mathbf{x}_i^* = \left( \mathbf{E}^{\text{c}}_i,\mathbf{E}^{\text{u}}_i,\mathbf{B}_i,\mathbf{t}^{\text{u}}_i,\mathbf{s}^{\text{u}}_i,\mathbf{t}^{\text{c}}_i,\mathbf{E}^{\text{d}}_i,\mathbf{t}^{\text{d}}_i,\mathbf{s}^{\text{d}}_i\right) $, we first prove that $\mathbf{x}_i^*$ is also a feasible solution to \eqref{P3}. Based on the inequality \eqref{ineq1} and constraint \eqref{P4e}, we have
\begin{subequations}\label{AppendixA_1}
	\begin{align}
		&\ln s^{\text{u}}_{k,i}-\ln t^{\text{u}}_{k,i} - \ln B_{k,i}\\
		\leq& \frac{s^{\text{u}}_{k,i} - s^{\text{u}}_{k,i-1}}{ s^{\text{u}}_{k,i-1}}+\ln s^{\text{u}}_{k,i-1} - \ln t^{\text{u}}_{k,i} - \ln B_{k,i}\label{AppendixA_1_2}\\
		\leq& 0,\label{AppendixA_1_3}
	\end{align}
\end{subequations}
which is equivalent to $s^{\text{u}}_{k,i}\leq t^{\text{u}}_{k,i} B_{k,i}$. Similarly, we can prove that  $s^{\text{d}}_{k,i}\leq t^{\text{d}}_{k,i} B_{k,i}$. Therefore, the constraints \eqref{P3d} and \eqref{P3h} hold for the point $\mathbf{x}_i^*$. As the other constraints of problem \eqref{P3} are also  satisfied in \eqref{P4}, we have $\mathbf{x}_i^*$ is also a feasible solution to \eqref{P3}.

Next, we show the convergence of \textbf{Algorithm \ref{Algo1}}. Specifically, based on \eqref{AppendixA_1}, we have
\begin{subequations}
	\begin{align}
		&\frac{s^{\text{u}}_{k,i} - s^{\text{u}}_{k,i}}{ s^{\text{u}}_{k,i}}+\ln s^{\text{u}}_{k,i} - \ln t^{\text{u}}_{k,i} - \ln B_{k,i}\\
		=&\ln s^{\text{u}}_{k,i} - \ln t^{\text{u}}_{k,i} - \ln B_{k,i}\\
		\leq \ &0,\label{AppendixA_2_3}
	\end{align}
\end{subequations}
which indicates that the point $\mathbf{x}_i^*$ also satisfies the constraint \eqref{P4e} of \eqref{P4} in the $(i+1)$-th iteration.  Similarly, $\mathbf{x}_i^*$ satisfies the constraint \eqref{P4h} of \eqref{P4} in the $(i+1)$-th iteration. Therefore, $\mathbf{x}_i^*$ is also feasible to the optimization problem \eqref{P4} in the $(i+1)$-th iteration, indicating that $L^{i}$ is also an achievable objective function value in the $(i+1)$-th iteration. As $\mathbf{x}_{i+1}^*$ minimizes objective function of \eqref{P4} in the $(i+1)$-th iteration, we have $L^{i+1}\leq L^{i}$. Therefore, the objective function value is non-increasing along the iteration, indicating that \textbf{Algorithm \ref{Algo1}} is assured to converge.

\section{Proof of Proposition \ref{prop3}}
\label{Appendix2}
As the problem \eqref{P5} is convex problem, we analyze the optimal solution through Lagrange duality. Let $\mu_1$, $\mu_2$, and $\mu_3$ denote the Lagrange multipliers associated with the constraints \eqref{P5b}, \eqref{P5c}, and \eqref{P5d}, respectively; let $\lambda$ be the multiplier associated with the time constraint \eqref{P5e}; and let $\nu$ be the multiplier associated with the energy constraint \eqref{P5f}. Then, the Lagrangian function of the \eqref{P5} can be written as follows
\begin{align}
	L
	=\,  D
	-\mu_1 \bigl[D-\rho D^{\text{u}}(t^{\text{u}},E^{\text{u}})\bigr]
	-\mu_2 \bigl[D-D^{\text{d}}(t^{\text{d}},E^{\text{d}})\bigr]\nonumber\\
	-\mu_3 \bigl[D-\rho D^{\text{c}}(t^{\text{c}},E^{\text{c}})\bigr] 
	-\lambda (t^{\text{u}}+t^{\text{c}}+t^{\text{d}}-T)\nonumber\\
	-\nu \left(\frac{E^{\text{u}}}{\eta^{\text{s}}}+\frac{E^{\text{d}}}{\eta^{\text{e}}}+\frac{E^{\text{c}}}{\eta^{\text{e}}}-TP_0\right),
\end{align}
where we define
\begin{align}
	&D^{\text{u}}\left( t^{\text{u}},E^{\text{u}}\right)  = t^{\text{u}}  B_\text{max} \log_2 \left(1+\frac{g^{\text{u}} E^{\text{u}}}{ t^{\text{u}} B_\text{max} N_0} \right),\\
	&D^{\text{d}}\left( t^{\text{d}},E^{\text{d}}\right)  = t^{\text{d}}  B_\text{max} \log_2 \left(1+\frac{g^{\text{d}} E^{\text{d}}}{ t^{\text{d}} B_\text{max} N_0} \right),\\
	&D^{\text{c}}\left( t^{\text{c}},E^{\text{c}}\right) = \frac{1}{\alpha} t^{\text{c}} \left( \frac{E^{\text{c}}}{\kappa t^{\text{c}}}\right) ^{1/3}.
\end{align}
By differentiating with respect to $E^{\text{u}}$, $E^{\text{d}}$, and $E^{\text{c}}$, we can obtain
\begin{align}
	\nu
	= \eta^{\text{s}} \mu_1 \rho \frac{\partial D^{\text{u}}}{\partial E^{\text{u}}}
	= \eta^{\text{e}} \mu_2 \frac{\partial D^{\text{d}}}{\partial E^{\text{d}}}
	= \eta^{\text{e}} \mu_3 \rho \frac{\partial D^{\text{c}}}{\partial E^{\text{c}}}.
\end{align}
In addition, differentiating the Lagrangian function with respect to $t^{\text{u}}$, $t^{\text{d}}$, and $t^{\text{c}}$ yields
\begin{align}
	\lambda
	= \mu_1 \rho \frac{\partial D^{\text{u}}}{\partial t^{\text{u}}}
	= \mu_2 \frac{\partial D^{\text{d}}}{\partial t^{\text{d}}}
	= \mu_3 \rho \frac{\partial D^{\text{c}}}{\partial t^{\text{c}}}.
\end{align}

By dividing the corresponding expressions above, the multipliers $\mu_1$, $\mu_2$, and $\mu_3$ can be eliminated, and we can obtain the following equations
\begin{align}\label{KKT1}
	\eta^{\text{s}}
	\frac{\partial D^{\text{u}}/\partial E^{\text{u}}}
	{\partial D^{\text{u}}/\partial t^{\text{u}}}
	=
	\eta^{\text{e}}
	\frac{\partial D^{\text{d}}/\partial E^{\text{d}}}
	{\partial D^{\text{d}}/\partial t^{\text{d}}}
	=
	\eta^{\text{e}}
	\frac{\partial D^{\text{c}}/\partial E^{\text{c}}}
	{\partial D^{\text{c}}/\partial t^{\text{c}}}
	=
	\frac{\nu}{\lambda}.
\end{align}
We define
\begin{align}
	\xi \triangleq \frac{\nu}{\lambda},
\end{align}
then, at the optimum, the uplink transmission, downlink transmission, and computation components share the same effective marginal energy-time tradeoff characterized by $\xi$.

Next, by calculating the derivatives in \eqref{KKT1}, we can obtain the following equations
\begin{align}
	\frac{\eta^{\text{s}} g^{\text{u}}}
	{\left(1+\gamma^{\text{u}}\right)\log\!\left(1+\gamma^{\text{u}}\right)-\gamma^{\text{u}}}
	=
	\frac{\eta^{\text{e}} g^{\text{d}}}
	{\left(1+\gamma^{\text{d}}\right)\log\!\left(1+\gamma^{\text{d}}\right)-\gamma^{\text{d}}}
	\nonumber\\
	=
	\frac{\eta^{\text{e}} t^{\text{c}}}{2E^{\text{c}}}
	=
	\xi,\label{Appendix2_equation}
\end{align}
where we define the uplink and downlink SNRs as
\begin{align}\label{Appendix2_SNR}
	\gamma^{\text{u}} \triangleq \frac{g^{\text{u}} E^{\text{u}}}{B_{\text{max}} N_0 t^{\text{u}}},
	\qquad
	\gamma^{\text{d}} \triangleq \frac{g^{\text{d}} E^{\text{d}}}{B_{\text{max}} N_0 t^{\text{d}}},
\end{align}
This aligns with the equations in \eqref{prop3:eq1} and \eqref{prop3:eq2}.

It is not difficult to prove that the function $\left( 1+x \right) \ln \left( 1+x \right)  -x$ is increasing with respect to $x$. Therefore, the first two terms are monotonically decreasing functions of $\gamma^{\text{u}}$ and $\gamma^{\text{d}}$, respectively. For any given $\xi$, the corresponding SNRs can be uniquely determined, which can be denoted by
\begin{align}
	\gamma^{\text{u}} = f^{\text{u}}(\xi), \\
	\gamma^{\text{d}} = f^{\text{d}}(\xi).
\end{align}

It is worth noting that the data throughput should be balanced at the optimum of \eqref{P5}. Otherwise, the time or energy resources will be wasted. Therefore, the constraints \eqref{P5b}-\eqref{P5d} should hold with equality. Then, based on \eqref{Appendix2_equation} \eqref{Appendix2_SNR}, and \eqref{P5b}-\eqref{P5d}, the time allocation variables can be expressed with $D$ and $\xi$ as
\begin{align}
	t^{\text{u}} &= \frac{D}{B_\text{max}\rho \log\!\left(1+\gamma^{\text{u}}\right)}, \\
	t^{\text{d}} &= \frac{D}{B_\text{max} \log\!\left(1+\gamma^{\text{d}}\right)}, \\
	t^{\text{c}} &= \frac{\kappa^{1/3}\alpha}{\rho}
	\left(\frac{2\xi}{\eta^{\text{e}}}\right)^{1/3} D.
\end{align}
and, the energy allocation variables are
\begin{align}
	E^{\text{u}} &= \frac{\gamma^{\text{u}} N_0}{\rho g^{\text{u}} \log\!\left(1+\gamma^{\text{u}}\right)} D, \\
	E^{\text{d}} &= \frac{\gamma^{\text{d}} N_0}{g^{\text{d}} \log\!\left(1+\gamma^{\text{d}}\right)} D, \\
	E^{\text{c}} &= \frac{\kappa^{1/3}\alpha\left(\eta^{\text{e}}\right)^{2/3}}{2\rho}
	(2\xi)^{-2/3} D.
\end{align}

Substituting the above expressions into the total time constraint and the total energy constraint, respectively, we can obtain the following equations
\begin{align}
	\frac{1}{B_\text{max}\rho \log\!\left(1+\gamma^{\text{u}}\right)}
	+\frac{1}{B_\text{max} \log\!\left(1+\gamma^{\text{d}}\right)}
	+\frac{\alpha}{\rho}
	\left(\frac{2\kappa\xi}{\eta^{\text{e}}}\right)^{1/3}
	=
	\frac{T}{D},
\end{align}
and
\begin{align}
	&\frac{\gamma^{\text{u}} N_0}
	{\eta^{\text{s}} \rho g^{\text{u}} \log\!\left(1+\gamma^{\text{u}}\right)}
	+\frac{\gamma^{\text{d}} N_0}
	{\eta^{\text{e}} g^{\text{d}} \log\!\left(1+\gamma^{\text{d}}\right)}\nonumber
	\\
	&\qquad\qquad\qquad +\frac{\kappa^{1/3}\alpha\left(\eta^{\text{e}}\right)^{-1/3}}{2\rho}
	(2\xi)^{-2/3}
	=
	\frac{TP_0}{D}.
\end{align}
Define the left-hand sides of the above two equations as $A(\xi)$ and $B(\xi)$, respectively. Then we have
\begin{align}
	A(\xi)=\frac{T}{D}, \qquad
	B(\xi)=\frac{TP_0}{D},
\end{align}
which further implies
\begin{align}
	P_0A(\xi)-B(\xi)=0.
\end{align}
This completes the proof.


\newpage

 


\vspace{11pt}


\vfill

\end{document}